\documentclass[12pt]{article}
\usepackage{sectsty}
\sectionfont{\large}
\usepackage{amsmath,amsfonts,amsthm,amssymb,amscd,latexsym,multicol}
\usepackage[cp1251]{inputenc}
\usepackage[dvips]{graphicx}
\usepackage[dvips]{color}
\usepackage[all]{xypic}
\oddsidemargin=0cm \textwidth=16cm \topmargin=-1.5cm
\textheight=24cm

\newtheorem{theorem}{Theorem}
\newtheorem{theorem*}{Theorem}
\newtheorem{lemma*}{Lemma}

\newtheorem{corollary*}{Corollary}
\newtheorem{corollary}{Corollary}
\newtheorem{remark*}{Remark}

\theoremstyle{remark}

\newtheorem{example*}{Example}
\newtheorem{remark}{Remark}
\begin{document}
\begin{center}
{\Large{Asymptotically Independent Markov Sampling: \\a new MCMC
scheme for Bayesian Inference}}\vspace{2mm}

James L. Beck and Konstantin M. Zuev\footnote{ Both authors contributed equally to this work. Corresponding author's email: zuev@caltech.edu.}

{\footnotesize Computing and Mathematical Sciences, Division of Engineering and Applied
Science, \\California Institute of Technology, USA}


\end{center}
\vspace{-6mm} \begin{abstract} In Bayesian statistics, many problems
can be expressed as the evaluation of the expectation of a quantity
of interest with respect to the posterior distribution. Standard
Monte Carlo method is often not applicable because the encountered
posterior distributions cannot be sampled directly. In this case,
the most popular strategies are the importance sampling method,
Markov chain Monte Carlo, and annealing. In this paper, we introduce
a new scheme for Bayesian inference, called Asymptotically
Independent Markov Sampling (AIMS), which is based on the above
methods. We derive important ergodic properties of AIMS. In
particular, it is shown that, under certain conditions, the AIMS
algorithm produces a uniformly ergodic Markov chain. The choice of
the free parameters of the algorithm is discussed and
recommendations are provided for this choice, both theoretically and
heuristically based. The efficiency of AIMS is demonstrated with
three numerical examples, which include
both multi-modal and higher-dimensional target posterior distributions.
\end{abstract}

KEY WORDS: Markov chain Monte Carlo, Importance Sampling, Simulated Annealing, Bayesian Inference.

\section{Three cornerstones of computational Bayesian inference}
In Bayesian statistics, many problems can be expressed as the
evaluation of the expectation of a quantity of interest with respect
to the posterior distribution. Standard Monte Carlo
simulation \cite{Metropolis_Ulam},
where expectations are estimated by sample averages based on samples
drawn independently from the posterior, is often not applicable because the encountered posterior distributions
are multi-dimensional non-Gaussian distributions that cannot be explicitly normalized.
In this case, the most popular strategies are
importance sampling and Markov chain Monte Carlo methods. We briefly review these two methods first because they
play an important role in the new MCMC method introduced in this paper.

\textit{Importance sampling}: This is nearly as old as the Monte
Carlo method (see, for instance, \cite{Kahn_Marshall}), and works
as follows. Suppose we want to evaluate $\mathbb{E}_\pi[h]$ that is
an expectation of a function of interest
$h:\Theta\rightarrow\mathbb{R}$ under distribution\footnote{Unless
otherwise stated, all probability distributions are assumed
to have densities with respect to Lebesgue measure,
$\pi(d\theta)=\pi(\theta)d\theta$. For simplicity, the same symbol
will be used to denote both the distribution and its density, and we write $\theta\sim\pi(\cdot)$ to denote that $\theta$ is distributed
according to $\pi(\cdot)$.}
$\pi(\cdot)$ defined on a parameter space
$\Theta\subseteq\mathbb{R}^d$,
\begin{equation}\label{E[h]}
    \mathbb{E}_\pi[h]=\int_\Theta h(\theta)\pi(\theta)d\theta.
\end{equation}
Suppose also that we are not able to sample directly from $\pi(\cdot)$,
although we can compute $\pi(\theta)$ for any $\theta\in\Theta$ to within a proportionality constant.
Instead, we sample from some other distribution
$q(\cdot)$ on $\Theta$ which is readily computable for any
$\theta\in\Theta$. Let $\theta^{(1)},\ldots,\theta^{(N)}$ be $N$
i.i.d. samples from $q(\cdot)$, and
$w^{(i)}=\pi(\theta^{(i)})/q(\theta^{(i)})$ denote the
\textit{importance weight} of the $i^{\mathrm{th}}$ sample, then we
can estimate $\mathbb{E}_\pi[h]$ by
\begin{equation}\label{IS}
    \hat{h}_N=\frac{\sum_{i=1}^Nw^{(i)}h(\theta^{(i)})}{\sum_{i=1}^Nw^{(i)}}.
\end{equation}
 The estimator $\hat{h}_N$ converges almost surely as
$N\rightarrow\infty$ to $\mathbb{E}_\pi[h]$ by the Strong Law of
Large Numbers for any choice of distribution $q(\cdot)$, provided
$\mathrm{supp}(\pi)\subseteq\mathrm{supp}(q)$. Note that the latter
condition automatically holds in Bayesian updating using data $\mathcal{D}$ where
$q(\theta)=\pi_0(\theta)$ is the prior density and
$\pi(\theta)\propto \pi_0(\theta)L(\theta)$ is the posterior $p(\theta|\mathcal{D})$, where
$L$ stands for the likelihood function $p(\mathcal{D}|\theta)$.

The estimator $\hat{h}_N$ in (\ref{IS}) generally
has a smaller mean square error than a more
straightforward unbiased importance sampling estimator:
\begin{equation}\label{IS_2}
    \hat{h}'_N=\frac{1}{N}\sum_{i=1}^Nw^{(i)}h(x^{(i)}).
\end{equation}
This is
especially clear when $h$ is nearly a constant: if $h\approx c$,
then $\hat{h}_N\approx c$, while $\hat{h}'_N$ has a larger
variation. Although $\hat{h}_N$ is biased for any finite $N$, the bias can be made small by taking sufficiently large $N$,
and the improvement in variance makes it a preferred alternative to
$\hat{h}'_N$ \cite{Liu,RobCas}. Another major advantage of using
$\hat{h}_N$ instead of $\hat{h}'_N$, which is especially
important for Bayesian applications, is that in using the former we need to know  $\pi(\theta)$ only up to a multiplicative normalizing constant;
whereas in the latter, this constant must be known exactly.

The accuracy of $\hat{h}_N$  depends critically on the choice of the
\textit{importance sampling distribution} (ISD) $q(\cdot)$, which is also called
the \textit{instrumental} or \textit{trial} distribution. If
$q(\cdot)$ is chosen carelessly such that the the importance weights
$w^{(i)}$ have a large variation, then $\hat{h}_N$ is essentially
based only on the few samples $\theta^{(i)}$ with the largest
weights, yielding generally a very poor estimate. Hence, for
importance sampling to work efficiently, $q(\cdot)$ must be a good
approximation of $\pi(\cdot)$ --- ``the importance sampling density
should mimic the posterior density'' \cite{Geweke} --- so that the
variance $\mathrm{var}_q[w]$ is not large. Since usually the prior and
posterior are quite different, it is, therefore, highly inefficient
to use the prior as the importance sampling distribution. When $\Theta$
is high-dimensional, and $\pi(\cdot)$ is complex, finding a good
importance sampling distribution can be very challenging, limiting
the applicability of the method \cite{AuBeck2}.

For the estimator
$\hat{h}'_N$ in (\ref{IS_2}), it is not difficult to show that the optimal
importance sampling density, i.e., $q^*(\cdot)$ that minimizes the
variance of $\hat{h}'_N$, is $q^*(\theta)\propto
|h(\theta)|\pi(\theta)$. This result is sometimes attributed to
Rubinstein \cite{Rubinstein}, although it was proved earlier
 by Kahn and Marshall \cite{Kahn_Marshall}. It is not
true, however, that $q^*(\cdot)$ is optimal for the estimator
$\hat{h}_N$. Note also that this optimality result is not useful in practice, since when $h(\theta)\geq0$, the required
normalizing constant of $q^*(\cdot)$ is $\int_\Theta
h(\theta)\pi(\theta)d\theta$, the integral of interest.

\textit{MCMC Sampling}: Instead of generating independent samples from an ISD,
we could generate dependent samples by simulating a
Markov chain whose state distribution converges to the posterior distribution $\pi(\cdot)$ as its stationary distribution.
\textit{Markov chain Monte Carlo} sampling (MCMC) originated in
statistical physics, and now is widely used in solving statistical
problems \cite{NealMCMC,Gilks_Richardson_Spiegelhalter,Liu,RobCas}.

The Metropolis-Hastings algorithm \cite{Metropolis,Hastings}, the
most popular MCMC technique, works as follows. Let $q(\cdot|\theta)$
be a distribution on $\Theta$, which may or may not depend on
$\theta\in\Theta$. Assume that $q(\cdot|\theta)$ is easy to sample
from and it is either computable (up to a multiplicative constant)
or symmetric, i.e. $q(\xi|\theta)=q(\theta|\xi)$. The sampling distribution
$q(\cdot|\theta)$ is called the \textit{proposal distribution}.
Starting from essentially any $\theta^{(1)}\in\mathrm{supp}(\pi)$, the
Metropolis-Hastings algorithm proceeds by iterating the following
two steps. First, generate a \textit{candidate} state $\xi$ from the
proposal density $q(\cdot|\theta^{(n)})$. Second, either accept $\xi$ as
the next state of the Markov chain, $\theta^{(n+1)}=\xi$, with
probability
$\alpha(\xi|\theta^{(n)})=\min\left\{1,\frac{\pi(\xi)q(\theta^{(n)}|\xi)}{\pi(\theta^{(n)})q(\xi|\theta^{(n)})}\right\}$;
or reject $\xi$ and set $\theta^{(n+1)}=\theta^{(n)}$ with the
remaining probability $1-\alpha(\xi|\theta^{(n)})$. It can be shown
(see, for example, \cite{RobCas}), that under fairly weak
conditions, $\pi(\cdot)$ is the stationary distribution of the
Markov chain $\theta^{(1)}, \theta^{(2)},\ldots$ and
\begin{equation}\label{convergence}
    \lim_{N\rightarrow\infty}\frac{1}{N}\sum_{i=1}^N
    h(\theta^{(i)})=\int_\Theta h(\theta)\pi(\theta)d\theta.
\end{equation}
Since the chain needs some time (so called ``burn-in'' period) to
converge to stationarity, in practice, an initial portion of, say, $N_0$ states
is usually discarded and
\begin{equation}\label{MCMCestimator}
   \tilde{h}_N=\frac{1}{N-N_0}\sum_{i=N_0+1}^Nh(\theta^{(i)})
\end{equation}
is used as an estimator for $\mathbb{E}_{\pi}[h]$.

The two main special cases of the Metropolis-Hastings algorithm are
Independent Metropolis-Hastings (IMH), where the proposal
distribution $q(\xi|\theta)=q_g(\xi)$ is independent of $\theta$ (so $q_g$ is a \textit{global proposal}),
and Random Walk Metropolis-Hastings (RWMH), where the proposal
distribution is of the form $q(\xi|\theta)=q_l(\xi-\theta)$, i.e. a
candidate state is proposed as $\xi=\theta^{(n)}+\epsilon_n$, where
$\epsilon_n\sim q_l(\cdot)$ is a random perturbation (so $q_l$ is a \textit{local proposal}). In both cases,
the choice of the proposal distribution strongly affects the
efficiency of the algorithms. For IMH to work well, as with
importance sampling, the proposal distribution must be a good
approximation of the \textit{target distribution} $\pi(\cdot)$, otherwise a large fraction of the candidate samples will be
rejected and the Markov chain will be too slow in covering the important regions for $\pi(\cdot)$.
When, however, it is possible to
find a proposal $q_g(\cdot)$, such that $q_g(\cdot)\approx
\pi(\cdot)$, IMH should always be preferred to RWMH because of better efficiency, i.e. better
approximations of $\mathbb{E}_{\pi}[h]$ for a given number of samples $N$. Unfortunately,
such a proposal is difficult to construct in the context of Bayesian
inference where the posterior $\pi(\cdot)$ is often complex and
high-dimensional. This limits the applicability of IMH.

Since the random walk proposal $q_l(\cdot)$ is local, it is less
sensitive to the target distribution. That is why, in practice, RWMH
is more robust and used more frequently than IMH. Nonetheless, there
are settings where RWMH also does not work well because of the
complexity of the posterior distribution. Although
(\ref{convergence}) is true in theory, a potential problem with RWMH
(and, in fact, with any MCMC algorithm) is that the generated samples
$\theta^{(1)},\ldots, \theta^{(N)}$ often consist of highly
correlated samples. Therefore, the estimator $\tilde{h}_N$ in
(\ref{MCMCestimator}) obtained from these samples tends to have a
large variance for a modest amount of samples. This is especially true when the posterior
distribution contains several widely-separated modes: a chain will
move between modes only rarely and it will take a long time before
it reaches stationarity. If this is the case, an estimate produced
by $\tilde{h}_N$ will be very inaccurate. At first glance, it seems
natural to generate several independent Markov chains, starting from
different random seeds, and hope that different chains will get
trapped by different modes. However, multiple runs will not in
general generate a sample in which each mode is correctly
represented, since the probability of a chain reaching a mode
depends more on the mode's ``basin of attraction'' than on the
probability concentrated in the mode
\cite{Neal_tempered_transitions}.

\textit{Annealing:} The concept of \textit{annealing} (or \textit{tempering}), which involves moving from an easy-to-sample
distribution to the target distribution via a sequence of
intermediate distributions, is one of the most effective methods of
handling multiple isolated modes. Together with importance sampling and MCMC,
annealing constitutes the third cornerstone of computational
Bayesian inference.

The idea of using the RWMH algorithm in conjunction with annealing
was introduced independently in \cite{Kirkpatrick} and \cite{Cerny}
for solving difficult optimization problems. The resulting
algorithm, called \textit{Simulated Annealing}, works as follows. Suppose we
want to find the global minimum of a function of interest $h: \Theta
\rightarrow \mathbb{R}$. This is equivalent to finding the global
maximum of $f_T(\theta)=\exp(-h(\theta)/T)$ for any given $T>0$. By
analogy with the Gibbs distribution in statistical mechanics,
$T$ is called the \textit{temperature parameter}. Let $T_0>T_1>\ldots$ be a sequence of
monotonically decreasing temperatures, in which $T_0$ is large
enough so that the probability distribution $\pi_0(\theta)\propto
f_{T_0}(\theta)$ is close to uniform, and
$\lim_{j\rightarrow\infty}T_j=0$. At each temperature $T_j$, the
Simulated Annealing method generates a Markov chain with
$\pi_j(\theta)\propto \exp(-h(\theta)/T_j)$ as its stationary
distribution. The final state of the Markov chain at simulation
level $j$ is used as the initial state for the chain at level $j+1$.
The key observation is that for any function $h$ such that
$\int_\Theta \exp(-h(\theta)/T)d\theta<\infty$ for all $T>0$,
distribution $\pi_j(\cdot)$, as $j$ increases, puts more and more of
its probability mass (converging to $1$) into a neighborhood of the
global minimum of $h$. Therefore, a sample drawn from $\pi_j(\cdot)$
would almost surely be in a vicinity of the global minimum of $h$
when $T_j$ is close to zero.

The success of Simulated Annealing in finding the global minimum
crucially depends on the schedule of temperatures used in the
simulation. It was proved in \cite{Geman} that if a logarithmic
schedule $T_j=T_0/\log(j+1)$ is used, then, under certain
conditions, there exists a value for $T_0$ such that use of this
schedule guarantees that the global minimum of $h$ will be reached
almost surely. In practice, however, such a slow annealing schedule is not computationally efficient.
It is more common to use either a geometric schedule, $T_{j+1}=\gamma T_j$ with $0<\gamma<1$, or
some adaptive schedule, which defines the temperature for the next
annealing level based on characteristics of the samples observed at
earlier levels. For examples of adaptive annealing schedules, see,
for instance, \cite{NealMCMC}.

In Bayesian inference problems, the idea of annealing is typically
employed in the following way. First, we construct (in advance or
adaptively) a sequence of distributions $\pi_0(\cdot),\ldots,
\pi_m(\cdot)$ interpolating between the prior distribution
$\pi_0(\cdot)$ and the posterior distribution $\pi(\cdot)\equiv
\pi_m(\cdot)$. Next, we generate i.i.d. samples
$\theta_0^{(1)},\ldots, \theta_0^{(N)}$ from the prior, which is
assumed to be readily sampled. Then, at each annealing level $j$,
using some MCMC algorithm and samples $\theta_{j-1}^{(1)},\ldots,
\theta_{j-1}^{(N)}$ from the previous level $j-1$, we generate
samples $\theta_{j}^{(1)},\ldots, \theta_{j}^{(N)}$ which
are approximately distributed according to $\pi_j(\cdot)$. We
proceed sequentially in this way, until the posterior distribution has
been sampled. The rationale behind this strategy is that sampling
from the multi-modal and, perhaps, high-dimensional posterior in
such a way is likely to be more efficient than a straightforward
MCMC sampling of the posterior.

The problem of sampling a complex distribution is encountered in
statistical mechanics, computational Bayesian inference, scientific
computing, machine learning, and other fields. As a result, many
different efficient algorithms have been recently developed, e.g.
the method of Simulated Tempering \cite{Marinari,Geyer}, the
Tempered Transition method \cite{Neal_tempered_transitions},
Annealed Importance Sampling \cite{Neal_AIS}, the Adaptive Metropolis-Hastings algorithm \cite{BeckAu}, Transitional Markov
Chain Monte Carlo method \cite{Ching}, to name a few.

In this paper we introduce a new MCMC scheme for Bayesian inference,
called \textit{Asymptotically Independent Markov Sampling} (AIMS),
which combines the three approaches described above --- importance
sampling, MCMC, and annealing --- in the following way. Importance
sampling with $\pi_{j-1}(\cdot)$ as the ISD is used
for a construction of an approximation $\hat{\pi}_{j}^N(\cdot)$ of
$\pi_{j}(\cdot)$, which is based on samples
$\theta_{j-1}^{(1)},\ldots,\theta_{j-1}^{(N)}\sim \pi_{j-1}(\cdot)$.
This approximation is then employed as the independent (global)
proposal distribution for sampling from $\pi_{j}(\cdot)$ by the IMH
algorithm. Intermediate distributions $\pi_0(\cdot),\ldots,
\pi_m(\cdot)$ interpolating between prior and posterior are
constructed adaptively, using the essential sample size (ESS) to
measure how much $\pi_{j-1}(\cdot)$ differs from $\pi_{j}(\cdot)$. When
the number of samples $N\rightarrow\infty$, the
approximation $\hat{\pi}_{j}^N(\cdot)$ converges to
$\pi_{j}(\cdot)$, providing the optimal proposal distribution. In
other words, when $N\rightarrow\infty$, the corresponding MCMC
sampler produces independent samples, hence the name of the
algorithm.

\begin{remark} The term ``Markov sampling'' has several different
meanings. In this paper it is used as synonymous to ``MCMC sampling''.
\end{remark}

In this introductory section, we have described all the main
ingredients that we will need in the subsequent sections. The rest of
the paper is organized as follows. In Section \ref{sec2}, the AIMS algorithm is described. The ergodic properties of AIMS are derived in Section \ref{Ergodic Properties}.
The efficiency of AIMS is illustrated in Section \ref{examples-section} with three numerical examples that include both multi-modal and
high-dimensional posterior distributions. Concluding remarks are made in Section
\ref{finish}.

\section{Asymptotically Independent Markov Sampling}\label{sec2}

Let $\pi_0(\cdot)$ and $\pi(\cdot)$ be the prior and the posterior
distributions defined on a parameter space $\Theta$, respectively,
so that, according to Bayes' Theorem, $\pi(\theta)\propto
\pi_0(\theta)L(\theta)$, where $L$ denotes the likelihood function for data $\mathcal{D}$.
Our ultimate goal is to draw samples that are distributed according to $\pi(\cdot)$.

In Asymptotically Independent Markov Sampling (AIMS), we
sequentially generate samples from intermediate distributions
$\pi_0(\cdot),\ldots,\pi_m(\cdot)$ interpolating between the prior
$\pi_0(\cdot)$ and the posterior $\pi(\cdot)\equiv\pi_m(\cdot)$. The
sequence of distributions could be specially constructed for a
given problem but the following scheme \cite{Neal_AIS,Ching}
generally yields good efficiency:
\begin{equation}\label{InterPDFs}
    \pi_j(\theta)\propto
    \pi_0(\theta)L(\theta)^{\beta_j},
\end{equation}
where $0=\beta_0<\beta_1<\ldots<\beta_m=1$. We will refer to $j$ and
$\beta_j$ as the \textit{annealing level} and the \textit{annealing
parameter} at level $j$, respectively. In the next subsection, we assume that $\beta_j$ is given and
therefore the intermediate distribution $\pi_j(\cdot)$ is also known.
In Subsection \ref{Implementation}, we describe how to choose the
annealing parameters adaptively.

\subsection{AIMS at annealing level $j$}\label{level j}

Our first goal is to describe how AIMS generates sample
$\theta_{j}^{(1)},\ldots, \theta_{j}^{(N_j)}$ from $\pi_j(\cdot)$
based on the sample $\theta_{j-1}^{(1)},\ldots,
\theta_{j-1}^{(N_{j-1})}\sim\pi_{j-1}(\cdot)$ obtained at the
previous annealing level. We start with an informal motivating
discussion that leads to the simulation algorithm. In Section
\ref{Ergodic Properties}, we rigorously prove that the corresponding
algorithm indeed generates samples which are asymptotically
distributed according to $\pi_j(\cdot)$, as the sample size
$N_j\rightarrow\infty$. Moreover, the larger $N_{j-1}$, the less
correlated generated samples $\theta_{j}^{(1)},\ldots,
\theta_{j}^{(N_j)}$ are --- a very desirable, yet rarely affordable,
property for any MCMC algorithm.

Let $K_j(\cdot|\cdot)$ be any transition kernel 
such that $\pi_j(\cdot)$ is a stationary distribution with respect
to $K_j(\cdot|\cdot)$. By definition, this means that
\begin{equation}\label{stationarity}
    \pi_j(\theta)d\theta=\int_\Theta
    K_j(d\theta|\xi)\pi_j(\xi)d\xi
\end{equation}
Applying importance sampling with the sampling density
$\pi_{j-1}(\cdot)$ to integral (\ref{stationarity}), we have:
\begin{equation}\label{ImportanceSamplingApproximation}
\begin{split}
    \pi_j(\theta)d\theta&=\int_\Theta K_j(d\theta|\xi)
    \frac{\pi_j(\xi)}{\pi_{j-1}(\xi)}\pi_{j-1}(\xi)d\xi\\
    &\approx\sum_{i=1}^{N_{j-1}}
    K_j(d\theta|\theta_{j-1}^{(i)})\bar{w}^{(i)}_{j-1}
    \overset{\underset{\mathrm{def}}{}}{=} \hat{\pi}_{j}^{N_{j-1}}(d\theta),
\end{split}
\end{equation}
where $\hat{\pi}_{j}^{N_{j-1}}(\cdot)$ will be used as the \textit{global proposal} distribution in the Independent Metropolis-Hastings algorithm, and
\begin{equation}\label{weights}
    w^{(i)}_{j-1}=\frac{\pi_j(\theta_{j-1}^{(i)})}{\pi_{j-1}(\theta_{j-1}^{(i)})}\propto L(\theta_{j-1}^{(i)})^{\beta_j-\beta_{j-1}}\hspace{3mm}
    \mbox{ and }
    \hspace{3mm} \bar{w}^{(i)}_{j-1}=\frac{w^{(i)}_{j-1}}{\sum_{k=1}^{N_{j-1}}w^{(k)}_{j-1}} 
\end{equation}
are the importance weights and normalized importance weights,
respectively. Note that to calculate $\bar{w}^{(i)}_{j-1}$, we do
not need to know the normalizing constants of $\pi_{j-1}(\cdot)$ and
$\pi_{j}(\cdot)$. If adjacent intermediate distributions
$\pi_{j-1}(\cdot)$ and $\pi_{j}(\cdot)$ are sufficiently close (in
other words, if $\Delta\beta_j=\beta_j-\beta_{j-1}$ is small enough),
then the importance weights (\ref{weights}) will not vary wildly, and,
therefore, we can expect that, for reasonably large $N_{j-1}$,
approximation $(\ref{ImportanceSamplingApproximation})$ is accurate.

\begin{remark} In \cite{Cheung}, the stationary condition (\ref{stationarity}) was used
for an analytical approximation of the target PDF to evaluate the evidence (marginal likelihood) for a model.
\end{remark}
\begin{remark} Note that for any finite $N_{j-1}$, distribution
$\hat{\pi}_{j}^{N_{j-1}}(\cdot)$ will usually have both continuous and discrete
parts. This follows from the fact that the transition kernel in
Markov chain simulation usually has the following form:
$K(d\theta|\xi)=k(\theta|\xi)d\theta+r(\xi)\delta_{\xi}(d\theta)$,
where $k(\cdot|\cdot)$ describes the continuous part of the
transition kernel, $\delta_{\xi}(\cdot)$ denotes the Dirac mass at
$\xi$, and $r(\xi)=1-\int_\Theta k(\theta|\xi)d\theta$. This is the form, for example, for the Metropolis-Hastings algorithm.
Therefore, (\ref{ImportanceSamplingApproximation}) must be
understood as the approximate equality of distributions, not
densities. In other words, (\ref{ImportanceSamplingApproximation})
means that
$\mathbb{E}_{\hat{\pi}_{j}^{N_{j-1}}}[h]\approx\mathbb{E}_{\pi_j}[h]$
and
$\mathbb{E}_{\hat{\pi}_{j}^{N_{j-1}}}[h]\rightarrow\mathbb{E}_{\pi_j}[h]$,
when $N_{j-1}\rightarrow\infty$, for all integrable functions $h$.
See also Example 2.1 below.
\end{remark}

From now on, we consider a special case where ${K}_j(\cdot|\cdot)$
is the random walk Metropolis-Hastings (RWMH) transition kernel. In
this case, it can be written as follows:
\begin{equation}\label{RWMHkernel}
    K_j(d\theta|\xi)=q_j(\theta|\xi)\min\left\{1,\frac{\pi_j(\theta)}{\pi_j(\xi)}\right\}d\theta
    + (1-a_j(\xi))\delta_\xi(d\theta),
\end{equation}
where $q_j(\cdot|\xi)$ is a symmetric \textit{local} proposal density,
and $a_j(\xi)$ is the probability of having a proper
transition $\xi$ to $\Theta\setminus\{\xi\}$:
\begin{equation}\label{acceptProb}
    a_j(\xi)=\int_\Theta
    q_j(\theta|\xi)\min\left\{1,\frac{\pi_j(\theta)}{\pi_j(\xi)}\right\}
    d\theta
\end{equation}

\textit{Example 2.1.}
As a simple illustration of (\ref{ImportanceSamplingApproximation}),
consider the case when $\pi_j(\cdot)=\mathcal{N}(\cdot|0,1)$,
$\pi_{j-1}(\cdot)=\mathcal{N}(\cdot|0,2)$, and
$q_j(\cdot|\xi)=\mathcal{N}(\cdot|\xi,1/2)$, where
$\mathcal{N}(\cdot|\mu,\sigma^2)$ denotes the Gaussian density with
mean $\mu$ and variance $\sigma^2$. The approximation
$\hat{\pi}_{j}^{N_{j-1}}(\cdot)$ based on the samples
$\theta_{j-1}^{(1)},\ldots,
\theta_{j-1}^{(N_{j-1})}\sim\mathcal{N}(\cdot|0,2)$ is shown in the
top panels of Figure~\ref{example1}, for $N_{j-1}=5$ and
$N_{j-1}=50$. Suppose that $h_1(\theta)=\theta$ and
$h_2(\theta)=\theta^2$ are the functions of interest. Then
$\mathbb{E}_{\pi_j}[h_1]=0$ and $\mathbb{E}_{\pi_j}[h_2]=1$. The
convergence of
$h^*_1(N_{j-1})=\mathbb{E}_{\hat{\pi}_{j}^{N_{j-1}}}[h_1]$ and
$h^*_2(N_{j-1})=\mathbb{E}_{\hat{\pi}_{j}^{N_{j-1}}}[h_2]$ is shown in
the bottom panel of Figure~\ref{example1}.

For sampling from $\pi_j(\cdot)$, we will use the
Independent Metropolis-Hastings algorithm (IMH) with the \textit{global}
proposal distribution $\hat{\pi}_j^{N_{j-1}}(\cdot)$. To accomplish
this, we have to be able to calculate the ratio
$\hat{\pi}_j^{N_{j-1}}(\theta)/\hat{\pi}_j^{N_{j-1}}(\xi)$ for any
$\theta,\xi\in \Theta$ as a part of the expression for the
acceptance probability
$\alpha_j(\xi|\theta)=\min\left\{1,\frac{\pi_j(\xi)\hat{\pi}_j^{N_{j-1}}(\theta)}{\pi_j(\theta)\hat{\pi}_j^{N_{j-1}}(\xi)}\right\}$.
However, as it has been already mentioned, the distribution
$\hat{\pi}_j^{N_{j-1}}(\cdot)$ does not have a density since it has
both continuous and discrete components, and, therefore, the ratio
$\hat{\pi}_j^{N_{j-1}}(\theta)/\hat{\pi}_j^{N_{j-1}}(\xi)$ makes no
sense. To overcome this ``lack-of-continuity problem'', taking into
account (\ref{ImportanceSamplingApproximation}) and
(\ref{RWMHkernel}), let us \textit{formally} define the global proposal distribution over $\Theta$ as:
\begin{equation}\label{formaldefinition}
\hat{\pi}_{j}^{N_{j-1}}(\theta)\overset{\underset{\mathrm{def}}{}}{=}\sum_{i=1}^{N_{j-1}}\bar{w}^{(i)}_{j-1}q_j(\theta|\theta^{(i)}_{j-1})\min\left\{1,\frac{\pi_j(\theta)}{\pi_j(\theta^{(i)}_{j-1})}\right\},
\end{equation}
if
$\theta\notin\left\{\theta_{j-1}^{(1)},\ldots,\theta_{j-1}^{(N_{j-1})}\right\}$,
and
\begin{equation}\label{formaldefinition2}
\hat{\pi}_{j}^{N_{j-1}}(\theta_{j-1}^{(k)})\overset{\underset{\mathrm{def}}{}}{=}\infty
\end{equation}
Note that $\hat{\pi}_{j}^{N_{j-1}}(\cdot)$ is a distribution on $\Theta$, but it does not have a density.
However, $\hat{\pi}_{j}^{N_{j-1}}(\cdot)$ induces another distribution on
$\Theta\setminus\left\{\theta_{j-1}^{(1)},\ldots,\theta_{j-1}^{(N_{j-1})}\right\}$ which does have a density,
given by the r.h.s. of (\ref{formaldefinition}). This motivates (\ref{formaldefinition}).

Now, using (\ref{formaldefinition}) and (\ref{formaldefinition2}), we can
calculate the ratio
$\hat{\pi}_j^{N_{j-1}}(\theta)/\hat{\pi}_j^{N_{j-1}}(\xi)$ as
follows:
\begin{enumerate}
  \item[I.] If
  $\theta,\xi\notin\left\{\theta_{j-1}^{(1)},\ldots,\theta_{j-1}^{(N_{j-1})}\right\}$,
  then
  \begin{equation}\label{case1}
    \frac{\hat{\pi}_j^{N_{j-1}}(\theta)}{\hat{\pi}_j^{N_{j-1}}(\xi)}=
    \frac{\sum_{i=1}^{N_{j-1}}\bar{w}^{(i)}_{j-1}q_j(\theta|\theta^{(i)}_{j-1})\min\left\{1,\frac{\pi_j(\theta)}{\pi_j(\theta^{(i)}_{j-1})}\right\}}
    {\sum_{i=1}^{N_{j-1}}\bar{w}^{(i)}_{j-1}q_j(\xi|\theta^{(i)}_{j-1})\min\left\{1,\frac{\pi_j(\xi)}{\pi_j(\theta^{(i)}_{j-1})}\right\}}
  \end{equation}
  \item[II.] If
  $\theta\notin\left\{\theta_{j-1}^{(1)},\ldots,\theta_{j-1}^{(N_{j-1})}\right\}$
  and $\xi=\theta_{j-1}^{(k)}$, then
  \begin{equation}\label{case2}
\frac{\hat{\pi}_j^{N_{j-1}}(\theta)}{\hat{\pi}_j^{N_{j-1}}(\xi)}=0 \hspace{3mm} \mbox{and} \hspace{3mm} \alpha_j(\xi|\theta)=0
  \end{equation}
  \item[III.] If $\theta=\theta_{j-1}^{(k)}$ and
  $\xi\notin\left\{\theta_{j-1}^{(1)},\ldots,\theta_{j-1}^{(N_{j-1})}\right\}$,
  then
  \begin{equation}\label{case3}
\frac{\hat{\pi}_j^{N_{j-1}}(\theta)}{\hat{\pi}_j^{N_{j-1}}(\xi)}=\infty \hspace{3mm} \mbox{and} \hspace{3mm} \alpha_j(\xi|\theta)=1
  \end{equation}
  \item[IV.] If $\theta=\theta_{j-1}^{(k)}$ and $\xi=\theta_{j-1}^{(l)}$, then $\frac{\hat{\pi}_j^{N_{j-1}}(\theta)}{\hat{\pi}_j^{N_{j-1}}(\xi)}$ is not defined.
\end{enumerate}

Notice that in the first three cases the ratio
$\hat{\pi}_j^{N_{j-1}}(\theta)/\hat{\pi}_j^{N_{j-1}}(\xi)$ is
readily computable, while in Case IV, it is not even defined. 
Therefore, it is very desirable to avoid Case
IV. The key observation that allows us to do this is the following:
suppose that the initial state $\theta^{(1)}_j$ of the Markov chain that is
generated is such that $\theta^{(1)}_j\in
\Theta^*_j\overset{\underset{\mathrm{def}}{}}{=}
\Theta\setminus\left\{\theta_{j-1}^{(1)},\ldots,\theta_{j-1}^{(N_{j-1})}\right\}$,
then $\theta^{(i)}_j\in\Theta^*_j$ for all $i\geq1$. Indeed, the
only way for the chain to enter the set
$\left\{\theta_{j-1}^{(1)},\ldots,\theta_{j-1}^{(N_{j-1})}\right\}$
is to generate a candidate state
$\xi\in\left\{\theta_{j-1}^{(1)},\ldots,\theta_{j-1}^{(N_{j-1})}\right\}$;
however, according to Case II, such a candidate will always be
rejected. Thus, by replacing the state space
$\Theta$ by $\Theta_j^*$ and using (\ref{case1}) and
(\ref{case2}) for evaluation of
$\hat{\pi}_j^{N_{j-1}}(\theta)/\hat{\pi}_j^{N_{j-1}}(\xi)$, we are
able to calculate the acceptance probability
$\alpha_j(\xi|\theta)=\min\left\{1,\frac{\pi_j(\xi)\hat{\pi}_j^{N_{j-1}}(\theta)}{\pi_j(\theta)\hat{\pi}_j^{N_{j-1}}(\xi)}\right\}$
involved in the IMH algorithm. It is clear that the replacement of
$\Theta$ by $\Theta_j^*$ is harmless for the ergodic
properties of the Markov chain when $\Theta\subseteq\mathbb{R}^d$.
\begin{remark} One may wonder why not just use the continuous part of
$\hat{\pi}_j^{N_{j-1}}(\cdot)$ as the global proposal density within
the IMH algorithm. In other words, why not use the density
$\hat{\pi}_{j,\mathrm{cont}}^{N_{j-1}}(\cdot)$, which is
proportional to the function defined by (\ref{formaldefinition}), as
the proposal density. Indeed, in this case we would not have any
difficulties with calculating the ratio
$\hat{\pi}_j^{N_{j-1}}(\theta)/\hat{\pi}_j^{N_{j-1}}(\xi)$. The
problem is that it is not clear how to sample from
$\hat{\pi}_{j,\mathrm{cont}}^{N_{j-1}}(\cdot)$, while sampling from
$\hat{\pi}_j^{N_{j-1}}(d\theta)=\sum_{i=1}^{N_{j-1}}
   \bar{w}^{(i)}_{j-1} K_j(d\theta|\theta_{j-1}^{(i)})$ is straightforward.
\end{remark}

The above discussion leads to the following algorithm for sampling
from the distribution $\pi_j(\cdot)$:

\vspace{3mm} \hrule height 0.6pt \rule{0pt}{4mm}\centerline
{\textbf{AIMS at annealing level $j$}}\rule{0pt}{4mm}
 \hrule
 \vspace{1mm}
 \texttt{Input:}

 \hspace{0.5cm}$\vartriangleright$
 $\theta_{j-1}^{(1)},\ldots,\theta_{j-1}^{(N_{j-1})}\sim\pi_{j-1}(\cdot)$,
  samples generated at annealing level $j-1$;

 \hspace{0.5cm}$\vartriangleright$
 $\theta_j^{(1)}\in\Theta^*_j=
\Theta\setminus\left\{\theta_{j-1}^{(1)},\ldots,\theta_{j-1}^{(N_{j-1})}\right\}$,
initial state of a Markov chain;

 \hspace{0.5cm}$\vartriangleright$ $q_j(\cdot|\xi)$, symmetric
 proposal density associated with the RWMH kernel;

\hspace{0.5cm}$\vartriangleright$ $N_j$, total number of Markov
chain states to be generated.

\texttt{Algorithm:}

\hspace{0.5cm} \textbf{for} $i=1,\ldots,N_j-1$ \textbf{do}

\hspace{1.2cm} 1) Generate a global candidate state
$\xi_g\sim\hat{\pi}_j^{N_{j-1}}(\cdot)$ as follows:

\hspace{1.8cm} a. Select $k$ from $\{1,\ldots,N_{j-1}\}$ with
probabilities $\bar{w}^{(i)}_{j-1}$ given by (\ref{weights}).

\hspace{1.8cm} b. Generate a local candidate $\xi_l\sim
q_j(\cdot|\theta_{j-1}^{(k)})$.

\hspace{1.8cm} c. Accept or reject $\xi_l$ by setting

\begin{equation}\label{AcceptRejectthelocal}
    \xi_g=\left\{
            \begin{array}{ll}
              \xi_l, & \hbox{with probability } \min\left\{1,\frac{\pi_j(\xi_l)}{\pi_j(\theta_{j-1}^{(k)})}\right\}; \\
              \theta_{j-1}^{(k)}, & \hbox{with the remaining probability.}
            \end{array}
          \right.
\end{equation}

\hspace{1.2cm} 2) Update $\theta_j^{(i)}\rightarrow\theta_j^{(i+1)}$
by accepting or rejecting $\xi_g$ as follows:

\hspace{1.8cm} \textbf{if} $\xi_g=\theta_{j-1}^{(k)}$

\hspace{2.3cm} Set $\theta_j^{(i+1)}=\theta_j^{(i)}$

\hspace{1.8cm} \textbf{else}

\hspace{2.3cm} Set

\begin{equation}\label{AcceptRejecttheglobal}
    \theta_j^{(i+1)}=\left\{
                      \begin{array}{ll}
                      \xi_g, & \hbox{with probability } \min\left\{1,\frac{\pi_j(\xi_g)\hat{\pi}_j^{N_{j-1}}(\theta_j^{(i)})}{\pi_j(\theta_j^{(i)})\hat{\pi}_j^{N_{j-1}}(\xi_g)}\right\}; \\
                      \theta_j^{(i)}, & \hbox{with the remaining probability.}
                      \end{array}
                     \right.
\end{equation}

\hspace{1.8cm} \textbf{end if}

\hspace{0.5cm} \textbf{end for}

 \texttt{Output:}

 \hspace{0.5cm}$\blacktriangleright$
$\theta^{(1)}_j,\ldots,\theta^{(N_j)}_j$, $N_j$ states of a Markov
chain  with a stationary distribution $\pi_j(\cdot)$
 \vspace{1mm}
 \hrule
 \vspace{5mm}

Schematically, the AIMS algorithm at annealing level $j$ is shown in
Figure~\ref{scheme}. The proof that $\pi_j(\cdot)$ is indeed a
stationary distribution for the Markov chain generated by AIMS is
given in Section \ref{Ergodic Properties}.

\begin{remark} As usually for MCMC algorithms, the fact of convergence of a Markov chain to
its stationary distribution does not depend on the initial state;
however, the speed of convergence does. One reasonable way to chose
the initial state $\theta_j^{(1)}\in\Theta^*_j$ in practical
applications is the following: generate $\theta_j^{(1)}\sim
q_j(\cdot|\theta_{j-1}^{(k^*)})$, where $k^*=\arg\max_k
\bar{w}^{(k)}_{j-1}$, i.e. $\theta_{j-1}^{(k^*)}$ has the largest
normalized importance weight.
\end{remark}

\subsection{The full AIMS procedure}\label{Implementation}

At the zero$^{\mathrm{th}}$ annealing level, $j=0$, we generate  prior
samples $\theta_0^{(1)},\ldots,\theta_0^{(N_0)}$, which usually can
be readily drawn directly by a suitable choice of the prior distribution
$\pi_0(\cdot)$. Then, using the algorithm described in the previous
subsection, we generate samples
$\theta_1^{(1)},\ldots,\theta_1^{(N_1)}$, which are approximately
distributed according to intermediate distribution
$\pi_1(\theta)\propto\pi_0(\theta)L(\theta)^{\beta_1}$. We proceed
like this until the posterior distribution
$\pi_m(\theta)\propto\pi_0(\theta)L(\theta)^{\beta_m}$ ($\beta_m=1$)
has been sampled. To make the description of AIMS complete, we have
to explain how to choose the annealing parameters $\beta_j$, for
$j=2,\ldots,m-1$.

It is clear that the choice of the annealing parameters is very
important, since, for instance, it affects the accuracy of the
importance sampling approximation
(\ref{ImportanceSamplingApproximation}) and, therefore, the
efficiency of the whole AIMS procedure. At the same time, it is
difficult to make a rational choice of the $\beta_j$-values in
advance, since this requires some prior knowledge about the posterior
distribution, which is often not available. For this reason, we
propose an adaptive way of choosing the annealing scheme.

In importance sampling, a useful measure of degeneracy of the method
is the \textit{effective sample size} (ESS) $N^{\mathrm{eff}}$ introduced in
\cite{Kong} and \cite{Liu2}. The ESS measures how similar the
importance sampling distribution $\pi_{j-1}(\cdot)$ is to the
target distribution $\pi_{j}(\cdot)$. Suppose $N_{j-1}$
independent samples
$\theta_{j-1}^{(1)},\ldots,\theta_{j-1}^{(N_{j-1})}$ are generated
from $\pi_{j-1}(\cdot)$, then the ESS of these samples is defined as
\begin{equation}\label{ESS}
    N^{\mathrm{eff}}_{j-1}=\frac{N_{j-1}}{1+\mathrm{var}_{\pi_{j-1}}[w]}=\frac{N_{j-1}}{\mathbb{E}_{\pi_{j-1}}[w^2]},
\end{equation}
where $w(\theta)=\pi_j(\theta)/\pi_{j-1}(\theta)$. The ESS can be
interpreted as implying that $N_{j-1}$ weighted samples $(\theta_{j-1}^{(1)},
w_{j-1}^{(1)}),\ldots,(\theta_{j-1}^{(N_{j-1})},
w^{(N_{j-1})}_{j-1})$ are worth $N^{\mathrm{eff}}_{j-1}(\leq
N_{j-1}$) i.i.d. samples drawn from the target distribution
$\pi_j(\cdot)$. One cannot evaluate the ESS exactly but an estimate
$\hat{N}^{\mathrm{eff}}_{j-1}$ of $N^{\mathrm{eff}}_{j-1}$ is given
by
\begin{equation}\label{ESSestimate}
\hat{N}^{\mathrm{eff}}_{j-1}(\bar{w}_{j-1})=\frac{1}{\sum_{i=1}^{N_{j-1}}(\bar{w}_{j-1}^{(i)})^2},
\end{equation}
where
$\bar{w}_{j-1}=(\bar{w}_{j-1}^{(1)},\ldots,\bar{w}_{j-1}^{(N_{j-1})})$
and $\bar{w}_{j-1}^{(i)}$ is the normalized importance weight of
$\theta_{j-1}^{(i)}$.

At annealing level $j$, when $\beta_{j-1}$ is already known,
the problem is to define $\beta_j$. Let
$\gamma=\hat{N}^{\mathrm{eff}}_{j-1}/N_{j-1}\in(0,1)$ be a
prescribed threshold that characterizes the ``quality'' of the
weighted sample (the larger $\gamma$ is, the ``better'' the weighted
sample is). Then we obtain the following equation:
\begin{equation}\label{equation on betaj}
\sum_{i=1}^{N_{j-1}}(\bar{w}_{j-1}^{(i)})^2=\frac{1}{\gamma N_{j-1}}
\end{equation}
Observe that this equation can be expressed as an equation for $\beta_j$ by using (\ref{weights}):
\begin{equation}\label{equation on betaj 2}
\frac{\sum_{i=1}^{N_{j-1}}L(\theta_{j-1}^{(i)})^{2(\beta_j-\beta_{j-1})}}{\left(\sum_{i=1}^{N_{j-1}}L(\theta_{j-1}^{(i)})^{\beta_j-\beta_{j-1}}\right)^2}
=\frac{1}{\gamma N_{j-1}}
\end{equation}
Solving this equation for $\beta_j$ gives us the value of the annealing
parameter at level $j$.


\begin{remark} Note that when $j\geq2$, the $\theta_{j-1}^{(1)},\ldots,\theta_{j-1}^{(N_{j-1})}$ are
generated by the Markov chain sampler described in the previous
subsection and therefore are not independent. This means that,
because of the autocorrelations produced by the Markov chain used,
the ``true'' ESS of this sample is, in fact, smaller than the one
given by (\ref{ESS}). This is useful to remember when choosing
$\gamma$. Also, this is another reason to select the prior distribution $\pi_0(\cdot)$ so that samples
can be generated independently at the start of each
AIMS run.
\end{remark}

Combining the AIMS algorithm at a given annealing level with the
described adaptive annealing scheme gives rise to the
following procedure.

\vspace{3mm} \hrule height 0.6pt \rule{0pt}{4mm}\centerline
{\textbf{The AIMS procedure}}\rule{0pt}{4mm}
 \hrule
 \vspace{1mm}
 \texttt{Input:}

\hspace{0.5cm}$\vartriangleright$ $\gamma$, threshold for the
effective sample size (ESS);

\hspace{0.5cm}$\vartriangleright$ $N_0, N_1, \ldots$, where $N_j$ is
the total number of Markov chain states to be generated

\hspace{0.85cm} at annealing level $j$;

\hspace{0.5cm}$\vartriangleright$ $q_1(\cdot|\xi), q_2(\cdot|\xi),
\ldots$, where $q_j(\cdot|\xi)$ is the symmetric  proposal density
associated with

\hspace{0.85cm} the RWMH kernel at annealing level $j$.

\texttt{Algorithm:}

\hspace{0.5cm} Set $j=0$, current annealing level.

\hspace{0.5cm} Set $\beta_0=0$, current annealing parameter.

\hspace{0.5cm} Sample $\theta_0^{(1)},\ldots,
\theta_0^{(N_0)}\stackrel{i.i.d}{\sim}\pi_0(\cdot)$.

\hspace{0.5cm} Calculate
$\bar{W}_0^{(i)}=\frac{L(\theta_{0}^{(i)})^{1-\beta_{0}}}{\sum_{i=1}^{N_{0}}L(\theta_{0}^{(i)})^{1-\beta_{0}}}$,
$i=1,\ldots,N_0$.

\hspace{0.5cm} Calculate the ESS
$\hat{N}^{\mathrm{eff}}_0=\hat{N}^{\mathrm{eff}}_0(\bar{W}_0)$
 using (\ref{ESSestimate}), which measures how similar the

\hspace{0.5cm} prior distribution $\pi_{0}(\cdot)$ is to the target
posterior distribution $\pi(\cdot)$.

\hspace{0.5cm} \textbf{while}
$\hat{N}^{\mathrm{eff}}_j/N_{j}<\gamma$ \textbf{do}

\hspace{1.7cm} Find $\beta_{j+1}$ from equation (\ref{equation on
betaj 2}).

\hspace{1.7cm} Calculate normalized importance weights
$\bar{w}^{(i)}_j$, $i=1,\ldots,N_j$ using (\ref{weights}).

\hspace{1.7cm} Generate  a Markov chain $\theta_{j+1}^{(1)},\ldots,
\theta_{j+1}^{(N_{j+1})}$ with the stationary distribution

\hspace{1.7cm}  $\pi_{j+1}(\cdot)$ using the AIMS algorithm at
annealing level $j+1$.


\hspace{1.7cm} Calculate
$\bar{W}_{j+1}^{(i)}=\frac{L(\theta_{j+1}^{(i)})^{1-\beta_{j+1}}}{\sum_{i=1}^{N_{j+1}}L(\theta_{j+1}^{(i)})^{1-\beta_{j+1}}}$,
$i=1,\ldots,N_{j+1}$.

\hspace{1.7cm} Calculate the ESS
$\hat{N}^{\mathrm{eff}}_{j+1}=\hat{N}^{\mathrm{eff}}_{j+1}(\bar{W}_{j+1})$
 using (\ref{ESSestimate}), which measures how

\hspace{1.7cm} similar the intermediate distribution
$\pi_{j+1}(\cdot)$ is to the posterior $\pi(\cdot)$.

\hspace{1.7cm} Increment $j$ to $j+1$.

\hspace{0.5cm} \textbf{end while}

\hspace{0.5cm} Set $\beta_{j+1}=1$, current annealing parameter.

\hspace{0.5cm} Set $m=j+1$, the total number of distributions in the
annealing scheme.

\hspace{0.5cm} Set $\bar{w}^{(i)}_{m-1}=\bar{W}^{(i)}_{m-1}$,
$i=1,\ldots,N_{m-1}$.

\hspace{0.61cm}Generate  a Markov chain $\theta_{m}^{(1)},\ldots,
\theta_{m}^{(N_{m})}$ with the stationary distribution

\hspace{0.5cm}  $\pi_{m}(\cdot)=\pi(\cdot)$ using the AIMS algorithm
at annealing level $m$.

  \texttt{Output:}

 \hspace{0.5cm}$\blacktriangleright$
$\theta^{(1)}_m,\ldots,\theta^{(N_m)}_m\dot{\sim}\pi(\cdot)$, samples that are approximately distributed according

\hspace{0.9cm} to the posterior distribution.
 \vspace{1mm}
 \hrule
 \vspace{5mm}

\subsection{Implementation issues}\label{impl issues}

As it follows from the description, the AIMS procedure has the
following parameters: $\gamma$, the threshold for the effective
sample size; $N_j$, the length of a Markov chain generated at
annealing level $j=1,\ldots,m$; and $q_j(\cdot|\xi)$, the symmetric
proposal density associated with the RWMH kernel at level
$j=1,\ldots,m$. Here, we discuss the choice of these parameters and
how this choice affects the efficiency of AIMS.

First of all, it is absolutely clear that, as for any Monte Carlo
method, the larger the number of generated samples is, the more
accurate the corresponding estimates of (\ref{E[h]}) are. However, we would like to
highlight the difference between the roles of $N_{j-1}$ and $N_j$ at
annealing level $j$. While $N_j$ is directly related to the
convergence of the chain $\theta_j^{(1)},\ldots,\theta_j^{(N_j)}$ to
its stationary distribution $\pi_j(\cdot)$, $N_{j-1}$ affects this
convergence implicitly through the global proposal distribution
$\hat{\pi}_j^{N_{j-1}}(\cdot)$: the larger $N_{j-1}$, the more
accurate approximation (\ref{ImportanceSamplingApproximation}) is,
and, therefore, the less correlated $\theta_{j}^{(1)},\ldots,
\theta_{j}^{(N_j)}$ are. When $N_{j-1}\rightarrow\infty$, samples
$\theta_{j}^{(1)},\ldots, \theta_{j}^{(N_j)}$ become independent
draws from $\pi_j(\cdot)$, hence the name of the algorithm. Thus, if
we increase $N=N_{j-1}=N_j$, the effect is twofold: first, the
sample size increases thereby increasing the effective number of independent samples at the $j^{\rm{th}}$ level (typical for any Monte Carlo method);
second, the samples become less correlated (a
useful feature of AIMS), again increasing the effective number of independent samples. As a result of these two effects,
increasing $N$ has a strong influence on the effective number of independent posterior samples and so strongly reduces
the variance of the estimator for (\ref{E[h]}).

Suppose now that we are at the last annealing level and generating a Markov chain
$\theta_{m}^{(1)},\ldots,
\theta_{m}^{(N_{m})}$ with the stationary distribution $\pi_{m}(\cdot)=\pi(\cdot)$. We will refer to this chain as the posterior Markov chain.
A critical question faced by users of MCMC methods is how to
determine when it is safe to stop sampling from the posterior distribution and use samples $\theta_{m}^{(1)},\ldots,
\theta_{m}^{(N_{m})}$ for estimation. In other words, how large
should $N_m$ be? One possible solution of this ``convergence assessment problem'' is to use one of the numerous published diagnostic techniques;
for example, see \cite{CowlesCarlin} for a comparative review of MCMC convergence diagnostics.
Unfortunately, none of the published diagnostics allows one to say with certainty that a finite sample from an MCMC algorithm is representative
of an underlying stationary distribution.
A more empirical approach for assessing convergence is to run several posterior Markov chains
 $\theta_{k,m}^{(1)},\ldots, \theta_{k,m}^{(N_{m})}$, $k=1,\ldots,K$, in parallel and
 monitor the corresponding estimators $\hat{h}_1,\ldots,\hat{h}_K$ of $\mathbb{E}_{\pi}[h]$.
A stopping rule for convergence is then
\begin{equation}\label{stopping rule}
    \max_{1\leq i<j\leq K}|\hat{h}_i-\hat{h}_j|<\varepsilon,
\end{equation}
where $\varepsilon$ is  a minimum precision requirement. It is important to emphasise, though, that rule (\ref{stopping rule}),
although easy-to-understand and easy-to-implement,
does not assure convergence of the chains (especially if $\pi(\cdot)$ is multi-modal):
``the potential for problems with multiple modes exists whenever there is no theoretical guarantee that the distribution is unimodal'' \cite{Neal_AIS}.

The threshold $\gamma$ affects the speed of annealing. If $\gamma$ is very small, i.e. close to zero,
then AIMS will have very few intermediate distributions interpolating between the prior and
posterior distributions, and this will lead to inaccurate results for a moderate number of samples. On the other hand, if $\gamma$ is very large,
i.e. close to one, then AIMS will have too many intermediate distributions, which will make the algorithm
computationally very expensive.

The proposed method for finding $\beta_j$-values is based on the ESS,
and $\beta_j$ is defined from equation (\ref{equation on betaj}) (or, equivalently, from (\ref{equation on betaj 2})).
A similar adaptive approach for defining an annealing scheme was
proposed in \cite{Ching}. It is based on the coefficient of variation (COV)
of the importance weights (\ref{weights}). More precisely, the equation for $\beta_j$ is given by
\begin{equation}\label{COV-creterion}
\frac{\sqrt{\frac{1}{N_{j-1}}\sum_{i=1}^{N_{j-1}}\left(w_{j-1}^{(i)}-\frac{1}{N_{j-1}}\sum_{i=1}^{N_{j-1}}w_{j-1}^{(i)}\right)^2}}{\frac{1}{N_{j-1}}\sum_{i=1}^{N_{j-1}}w_{j-1}^{(i)}}=\delta,
\end{equation}
where $\delta>0$ is a prescribed threshold. It is easy to show that the ESS-criterion (\ref{equation on betaj})
and the COV-criterion (\ref{COV-creterion}) are mathematically equivalent; in fact, $\hat{N}^{\mathrm{eff}}_{j-1}=N_{j-1}/(1+\delta^2)$.
We prefer to use the former criterion since $\gamma$ has a clear meaning:
it is the factor by which the (essential) sample size of the weighted sample is reduced as a penalty for sampling from the importance sampling density
instead of the target distribution. It has been found in \cite{Ching} that $\delta=1$ is usually a
reasonable choice of the threshold. This corresponds to $\gamma=1/2$.
Our simulation results (see Section \ref{examples-section}) also show that annealing schemes with
$\gamma$ around $1/2$ yield good efficiency.

The choice of the local proposal density $q_j(\cdot|\xi)$ associated with the RWMH kernel determines
the ergodic properties of the Markov chain generated by AIMS at level
$j$; it also determines how efficiently the chain explores local neighborhoods of samples $\theta_{j-1}^{(1)},\ldots,\theta_{j-1}^{(N_{j-1})}$
generated at the previous level. This makes the choice of $q_j(\cdot|\xi)$  very important.

It has been observed by many researchers that the efficiency of Metropolis-Hastings based MCMC methods
is not sensitive to the type of the proposal density; however, it strongly depends on its variance (e.g. \cite{Gelman,AuBeck}).
For this reason, we suggest using a Gaussian density as the local proposal:
\begin{equation}\label{Gaussian proposal}
    q_j(\theta|\xi)=\mathcal{N}(\theta|\xi,c_j^2\mathbb{I}),
\end{equation}
where $\xi$ and $c_j^2\mathbb{I}$ are the mean and diagonal covariance matrix, respectively.
The scaling parameter $c_j^2$ determines the ``spread'' of the local proposal distribution.
In Section \ref{Ergodic Properties}, we prove (Theorem \ref{theorem3}) that, under certain conditions, the acceptance rate $\mathcal{\bar{A}}_j$
(i.e. the expected probability of having a proper Markov transition $\theta_j^{(i)}$ to $\theta_j^{(i+1)}\neq\theta_j^{(i)}$)
satisfies $\mathcal{\bar{A}}_j\geq \frac{1}{M}$, where constant $M$ depends on $q_j(\cdot|\xi)$ and, therefore, on $c_j^2$.
This result can be potentially used for finding an optimal $c_j^2$ that would minimize $M$.
Alternatively, a more empirical way of choosing the scaling factor consists of adjusting $c_j^2$ based on the estimated acceptance rate.
This works as follows: first, choose an initial value for the scaling factor, $c_{j,0}^2$, and estimate the corresponding
acceptance rate $\mathcal{\bar{A}}_j(c_{j,0}^2)$ based on $N_j$ generated Markov states, then modify $c_{j,0}^2$ to obtain
an increase in $\mathcal{\bar{A}}_j$. Whether this optimization in $c_j^2$ is useful depends on whether the accuracy of the estimator that is achieved
compensates for the additional computational cost. Finally, note that our simulation results show (see Section \ref{examples-section}) that,
as $j$ increases, the corresponding optimal scaling factor $c_j^2$ decreases slightly. This observation coincides with intuition, since when
$j$ increases, the intermediate distributions $\pi_j(\cdot)$ become more concentrated.

In the following section we establish the ergodic properties of the
Markov chains generated by AIMS.

\section{Ergodic properties of AIMS}\label{Ergodic Properties}
Since the discussion in Subsection \ref{level j}, which motivated
AIMS at annealing level $j$, involved delta functions and formal
equalities (\ref{formaldefinition}) and (\ref{formaldefinition2}),
we cannot simply rely on the convergence of the IMH algorithm in verification of AIMS;
a rigorous proof is needed. First we prove that the described
algorithm indeed generates a Markov chain with a stationary
distribution $\pi_j(\cdot)$. We also explain that when the proposal
density $q_j(\cdot|\xi)$ is reasonably chosen, $\pi_j(\cdot)$ is the
unique (and, therefore, limiting) stationary distribution of the
corresponding Markov chain.

\begin{theorem}\label{theorem1} Let $\theta^{(1)}_j,\theta^{(2)}_j,\ldots$ be the Markov chain on $\Theta_j^*=\Theta\setminus\left\{\theta_{j-1}^{(1)},\ldots,\theta_{j-1}^{(N_{j-1})}\right\}$ generated
by the AIMS algorithm at annealing level $j$, then $\pi_j(\cdot)$ is
a stationary
distribution of the Markov chain.
\end{theorem}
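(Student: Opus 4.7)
The plan is to reduce the verification of stationarity to the standard Independent Metropolis--Hastings (IMH) detailed balance argument, after first cleaning up the mixed (continuous plus atomic) nature of the global proposal. Throughout I condition on the previous-level samples $\theta_{j-1}^{(1)},\ldots,\theta_{j-1}^{(N_{j-1})}$, so these are treated as fixed points of $\Theta$ and the Markov chain lives on $\Theta_j^*$; since each $\theta_{j-1}^{(i)}$ is Lebesgue-null and $\pi_j$ has a density, $\pi_j(\Theta_j^*)=1$, so ``stationary on $\Theta_j^*$'' and ``stationary w.r.t.\ $\pi_j$'' are equivalent.

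First I would write the one-step transition kernel $P_j(d\xi\mid\theta)$ of the chain produced by Step~1 + Step~2 explicitly. Sampling $\xi_g$ in Step~1 produces a mixture whose continuous part (on $\Theta_j^*$) has density
\begin{equation*}
g(\xi)\;=\;\sum_{i=1}^{N_{j-1}}\bar{w}^{(i)}_{j-1}\,q_j(\xi\mid\theta^{(i)}_{j-1})\min\!\left\{1,\frac{\pi_j(\xi)}{\pi_j(\theta^{(i)}_{j-1})}\right\},
\end{equation*}
i.e.\ precisely the function on the r.h.s.\ of (\ref{formaldefinition}), while the residual mass is an atom at one of the $\theta_{j-1}^{(k)}$. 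The ``if $\xi_g=\theta_{j-1}^{(k)}$'' branch in Step~2 means that any atomic proposal is \emph{always} rejected, so it contributes only to a $\delta_\theta(d\xi)$ term. Consequently, for $\theta\in\Theta_j^*$,
\begin{equation*}
P_j(d\xi\mid\theta)\;=\;g(\xi)\,\min\!\left\{1,\frac{\pi_j(\xi)\,g(\theta)}{\pi_j(\theta)\,g(\xi)}\right\}\,d\xi\;+\;\bigl(1-a_j^{\mathrm{AIMS}}(\theta)\bigr)\,\delta_\theta(d\xi),
\end{equation*}
where $a_j^{\mathrm{AIMS}}(\theta)$ is the total probability of actually moving. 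The ratio $\hat{\pi}_j^{N_{j-1}}(\theta)/\hat{\pi}_j^{N_{j-1}}(\xi)$ in the acceptance rule (\ref{AcceptRejecttheglobal}) is, by Case~I, exactly $g(\theta)/g(\xi)$, which is what makes this collapse to a clean IMH form.

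Next, with $P_j$ in the above form, I would verify detailed balance $\pi_j(\theta)P_j(d\xi\mid\theta)=\pi_j(\xi)P_j(d\theta\mid\xi)$ on $\Theta_j^*\times\Theta_j^*$. The $\delta$-parts are trivially symmetric, so it suffices to check the densities of the off-diagonal parts, i.e.\
\begin{equation*}
\pi_j(\theta)\,g(\xi)\min\!\left\{1,\tfrac{\pi_j(\xi)g(\theta)}{\pi_j(\theta)g(\xi)}\right\}\;=\;\pi_j(\xi)\,g(\theta)\min\!\left\{1,\tfrac{\pi_j(\theta)g(\xi)}{\pi_j(\xi)g(\theta)}\right\}.
\end{equation*}
Both sides equal $\min\bigl\{\pi_j(\theta)g(\xi),\,\pi_j(\xi)g(\theta)\bigr\}$ by the standard Metropolis identity, so detailed balance holds. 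Integrating the detailed balance identity over $\theta$ then yields $\int_{\Theta_j^*}\pi_j(\theta)P_j(d\xi\mid\theta)\,d\theta=\pi_j(\xi)\,d\xi$, which is precisely the stationarity of $\pi_j$ with respect to $P_j$.

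The only real subtlety (and the thing to flag carefully rather than the serious obstacle it first appears to be) is that the ``formal'' density prescription (\ref{formaldefinition})--(\ref{formaldefinition2}) does not define a probability density on $\Theta$; a naive application of the IMH theorem would be unjustified. The point is that once one restricts the state space to $\Theta_j^*$ and absorbs all atomic-proposal events into the self-transition, the proposal effectively reduces to the honest sub-probability density $g$, whose missing mass merely inflates the rejection probability. After this reduction, the proof is a direct verification of the Metropolis identity and requires no further conditions on $q_j$; conditions on $q_j$ will only be needed later (e.g.\ for Theorem~\ref{theorem3}) to guarantee that this stationary distribution is also the unique limiting one.
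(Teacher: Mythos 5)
Your proof is correct and follows essentially the same route as the paper's: both write out the one-step kernel explicitly (your $g$ is exactly the continuous part $\hat{\pi}_j^{N_{j-1}}$ from (\ref{formaldefinition})), observe that atomic proposals are always rejected and hence only feed the self-transition term, and then verify detailed balance via the Metropolis identity $a\min\{1,b/a\}=\min\{a,b\}=b\min\{1,a/b\}$. Your explicit remarks on the sub-probability nature of $g$ and on $\pi_j(\Theta_j^*)=1$ make the ``lack-of-density'' subtlety slightly more transparent than the paper's presentation, but the argument is the same.
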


\begin{proof} Let $\mathcal{K}_j(\cdot|\cdot)$ denote the transition kernel of the Markov chain generated by
AIMS at annealing level $j$. From the discription of the algorithm
it follows that $\mathcal{K}_j(\cdot|\cdot)$ has the following form:

\begin{equation}\label{AIMStransitionkernel}
\begin{split}
\mathcal{K}_j(d\xi|\theta)&=\sum_{i=1}^{N_{j-1}} \bar{w}^{(i)}_{j-1}
q_j(\xi|\theta_{j-1}^{(i)})\min\left\{1,\frac{\pi_j(\xi)}{\pi_j(\theta_{j-1}^{(i)})}\right\}
\min\left\{1,\frac{\pi_j(\xi)\hat{\pi}_{j}^{N_{j-1}}(\theta)}{\pi_j(\theta)\hat{\pi}_{j}^{N_{j-1}}(\xi)}\right\}d\xi\\
&+(1-\mathcal{A}_j(\theta))\delta_{\theta}(d\xi),
\end{split}
\end{equation}
where $\mathcal{A}_j(\theta)$ is the probability of having a proper
transition $\theta$ to $\Theta_j^*\setminus\{\theta\}$:
\begin{equation}\label{AIMSacceptanceprob}
    \mathcal{A}_j(\theta)=\int_{\Theta_j^*} \sum_{i=1}^{N_{j-1}}
    \bar{w}^{(i)}_{j-1}
q_j(\xi|\theta_{j-1}^{(i)})\min\left\{1,\frac{\pi_j(\xi)}{\pi_j(\theta_{j-1}^{(i)})}\right\}
\min\left\{1,\frac{\pi_j(\xi)\hat{\pi}_{j}^{N_{j-1}}(\theta)}{\pi_j(\theta)\hat{\pi}_{j}^{N_{j-1}}(\xi)}\right\}d\xi
\end{equation}

A sufficient condition for $\pi_j(\cdot)$ to be a stationary
distribution is for $\mathcal{K}_j(\cdot|\cdot)$ to satisfy the
detailed balance condition:
\begin{equation}\label{DBC}
    \pi_j(d\theta)\mathcal{K}_j(d\xi|\theta)=\pi_j(d\xi)\mathcal{K}_j(d\theta|\xi)
\end{equation}
Without loss of generality, we assume that $\theta\neq\xi$, since
otherwise (\ref{DBC}) is trivial. In this case
$\mathcal{K}_j(d\xi|\theta)$ is given by the first term in
(\ref{AIMStransitionkernel}), since the second term vanishes. Thus,
all we need to prove is that function
\begin{equation}\label{expression}
\mathcal{E}(\theta,\xi)\overset{\underset{\mathrm{def}}{}}{=}\pi_j(\theta)\sum_{i=1}^{N_{j-1}}
\bar{w}^{(i)}_{j-1}
q_j(\xi|\theta_{j-1}^{(i)})\min\left\{1,\frac{\pi_j(\xi)}{\pi_j(\theta_{j-1}^{(i)})}\right\}
\min\left\{1,\frac{\pi_j(\xi)\hat{\pi}_{j}^{N_{j-1}}(\theta)}{\pi_j(\theta)\hat{\pi}_{j}^{N_{j-1}}(\xi)}\right\}
\end{equation}
is symmetric with respect to permutation $\theta\leftrightarrow\xi$,
for all $\theta,\xi\in\Theta_j^*$. Taking into account
(\ref{formaldefinition}) and a simple fact that
$a\min\{1,b/a\}=b\min\{1,a/b\}$ for all $a,b>0$, we have:
\begin{equation}\label{expression2}
\begin{split}
\mathcal{E}(\theta,\xi)&=
\pi_j(\theta)\hat{\pi}_{j}^{N_{j-1}}(\xi)\min\left\{1,\frac{\pi_j(\xi)\hat{\pi}_{j}^{N_{j-1}}(\theta)}{\pi_j(\theta)\hat{\pi}_{j}^{N_{j-1}}(\xi)}\right\}\\
&=\pi_j(\xi)\hat{\pi}_{j}^{N_{j-1}}(\theta)\min\left\{1,\frac{\pi_j(\theta)\hat{\pi}_{j}^{N_{j-1}}(\xi)}{\pi_j(\xi)\hat{\pi}_{j}^{N_{j-1}}(\theta)}\right\}=\mathcal{E}(\xi,\theta)
\end{split}
\end{equation}
This proves that $\pi_j(\cdot)$ is a stationary distribution of the
AIMS Markov chain.
\end{proof}

A stationary distribution is unique and is the limiting distribution
for a Markov chain, if the chain is aperiodic and irreducible (see,
for example, \cite{Tierney}). In the case of AIMS, aperiodicity is
guaranteed by the fact that the probability of having a repeated
sample $\theta^{(i+1)}_j=\theta^{(i)}_j$ is not zero: for example,
if the local candidate state $\xi_l$ is rejected in step 1c, then we
automatically have $\theta^{(i+1)}_j=\theta^{(i)}_j$. A Markov chain
with stationary distribution $\pi(\cdot)$ is irreducible if, for any
initial state, it has positive probability of entering any set to
which $\pi(\cdot)$ assigns positive probability. It is clear that if
the proposal distribution $q_j(\cdot|\xi)$ is ``standard'' (e.g.
Gaussian, uniform, log-normal, etc), then AIMS generates an irreducible
Markov chain. In this case, $\pi_j(\cdot)$ is therefore the unique
stationary distribution of the AIMS Markov chain, and for every
$\theta\in\Theta_j^*$
\begin{equation}\label{limitingproperty}
\lim_{n\rightarrow\infty}\|\mathcal{K}^n_j(\cdot|\theta)-\pi_j(\cdot)\|_{\mathrm{TV}}=0,
\end{equation}
with $\|\cdot\|_{\mathrm{TV}}$ denoting the total variation distance.
Recall that the total variation distance between two measures
$\mu_1(\cdot)$ and $\mu_2(\cdot)$ on $\Theta$  is defined as
$\|\mu_1(\cdot)-\mu_1(\cdot)\|_{\mathrm{TV}}=\sup_{A\subset\Theta}|\mu_1(A)-\mu_2(A)|$.
In a simulation setup, the most important consequence of convergence
property (\ref{limitingproperty}) is, of course, that the sample
mean converges to the expectation of a measurable function of interest almost
surely:
\begin{equation}\label{convergencetothemean}
    \lim_{N_j\rightarrow\infty}\frac{1}{N_j}\sum_{i=1}^{N_j}h(\theta_j^{(i)})=\int_{\Theta}h(\theta)\pi_j(\theta)d\theta
\end{equation}

Convergence (\ref{limitingproperty}) ensures the proper behavior of
the AIMS chain $\theta_j^{(1)},\theta_j^{(2)},\ldots$ regardless of
the initial state $\theta_j^{(1)}$. A more detailed description of
convergence properties involves the study of the speed of
convergence of $\mathcal{K}_j^n(\cdot|\theta)$ to $\pi_j(\cdot)$. 
Evaluation (or estimation) of this speed is very important for any
MCMC algorithm, since it relates to a stopping rule for this
algorithm: the higher the speed of convergence
$\mathcal{K}_j^n(\cdot|\theta)\rightarrow\pi_j(\cdot)$, the less
samples are need to obtain an accurate estimate in
(\ref{convergencetothemean}). Recall, following \cite{MeynTweedie},
that a chain $\theta^{(1)},\theta^{(2)},\ldots$ is called
\textit{uniformly ergodic} if
\begin{equation}\label{uniform ergodicity}
    \lim_{n\rightarrow\infty}\sup_{\theta\in\Theta}\|\mathcal{K}^n(\cdot|\theta)-\pi(\cdot)\|_{\mathrm{TV}}=0
\end{equation}
The property of uniform ergodicity is stronger than
(\ref{limitingproperty}), since it guarantees that the speed of
convergence is uniform over the whole space. Moreover, a Markov
chain is uniformly ergodic if and only if there exist $r>1$ and
$R<\infty$ such that for all $\theta\in\Theta$
\begin{equation}\label{uniformgeometricrate}
    \|\mathcal{K}^n(\cdot|\theta)-\pi(\cdot)\|_{\mathrm{TV}}\leq
    Rr^{-n},
\end{equation}
that is, the convergence in (\ref{uniform ergodicity}) takes place
at uniform geometric rate \cite{MeynTweedie}.

\begin{theorem} If there exists a constant $M$ such that for all $\theta\in\Theta^*_j$
\begin{equation}\label{condition}
\pi_j(\theta)\leq M \hat{\pi}_j^{N_{j-1}}(\theta),
\end{equation}
then the AIMS algorithm at annealing level $j$ produces a uniformly
ergodic chain and
\begin{equation}\label{speed of convergence}
   \|\mathcal{K}_j^n(\cdot|\theta)-\pi_j(\cdot)\|_{\mathrm{TV}}\leq
    \left(1-\frac{1}{M}\right)^{n}
\end{equation}
\end{theorem}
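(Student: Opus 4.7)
The plan is to establish a Doeblin-type minorization condition for the AIMS transition kernel and then invoke the classical result that such a condition forces uniform ergodicity with the stated geometric rate. Throughout, I will assume the hypothesis $\pi_j(\theta)\le M\hat{\pi}_j^{N_{j-1}}(\theta)$ holds for all $\theta\in\Theta_j^*$, so $M\ge 1$ necessarily.

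First I would rewrite the kernel $\mathcal{K}_j(d\xi|\theta)$ in a compact form. Comparing (\ref{AIMStransitionkernel}) with the formal definition (\ref{formaldefinition}) of $\hat{\pi}_j^{N_{j-1}}$, the continuous part of $\mathcal{K}_j(d\xi|\theta)$ is
\begin{equation*}
\hat{\pi}_j^{N_{j-1}}(\xi)\,\min\left\{1,\frac{\pi_j(\xi)\hat{\pi}_{j}^{N_{j-1}}(\theta)}{\pi_j(\theta)\hat{\pi}_{j}^{N_{j-1}}(\xi)}\right\} d\xi,
\end{equation*}
so that, on $\xi\ne\theta$, the kernel density simplifies to $\min\bigl\{\hat{\pi}_j^{N_{j-1}}(\xi),\,\pi_j(\xi)\hat{\pi}_j^{N_{j-1}}(\theta)/\pi_j(\theta)\bigr\}$. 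This is exactly the IMH-style form one would expect if $\hat{\pi}_j^{N_{j-1}}$ were a genuine density, and it is what makes the rest of the argument classical.

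Next I would derive the minorization. The hypothesis gives two useful bounds at once: $\hat{\pi}_j^{N_{j-1}}(\xi)\ge\pi_j(\xi)/M$ and $\hat{\pi}_j^{N_{j-1}}(\theta)/\pi_j(\theta)\ge 1/M$. Both terms inside the minimum above are therefore $\ge\pi_j(\xi)/M$, so for every $\theta\in\Theta_j^*$ and every measurable $A\subseteq\Theta_j^*$,
\begin{equation*}
\mathcal{K}_j(A|\theta)\;\ge\;\int_{A\setminus\{\theta\}}\frac{\pi_j(\xi)}{M}\,d\xi\;=\;\frac{1}{M}\,\pi_j(A),
\end{equation*}
where the last equality uses that $\pi_j$ is absolutely continuous with respect to Lebesgue measure so that the singleton $\{\theta\}$ is negligible. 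This is a Doeblin (uniform minorization) condition with small set $\Theta_j^*$, minorizing measure $\pi_j$, and constant $\varepsilon=1/M$.

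Finally I would invoke the standard result (e.g.\ Meyn and Tweedie \cite{MeynTweedie}, or equivalently Theorem 8 of Tierney \cite{Tierney}): whenever a Markov kernel $\mathcal{K}$ with stationary distribution $\pi$ satisfies $\mathcal{K}(\cdot|\theta)\ge\varepsilon\,\pi(\cdot)$ on the whole state space, the chain is uniformly ergodic and
\begin{equation*}
\sup_{\theta}\|\mathcal{K}^n(\cdot|\theta)-\pi(\cdot)\|_{\mathrm{TV}}\le(1-\varepsilon)^n.
\end{equation*}
Combined with Theorem \ref{theorem1} (which supplies $\pi_j$ as a stationary distribution) this immediately yields (\ref{speed of convergence}) with $\varepsilon=1/M$. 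The main obstacle is really only the first step: recognizing that, although $\hat{\pi}_j^{N_{j-1}}$ is not a bona fide density, the mixed discrete/continuous representation in (\ref{AIMStransitionkernel}) collapses, via the formal expression (\ref{formaldefinition}), into an IMH-style kernel on $\Theta_j^*$ to which the classical minorization argument applies verbatim.
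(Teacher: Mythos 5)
Your proposal is correct and follows essentially the same route as the paper: both establish the minorization $\mathcal{K}_j(\cdot|\theta)\geq \frac{1}{M}\pi_j(\cdot)$ on $\Theta_j^*$ by collapsing the kernel's continuous part into the IMH form $\min\bigl\{\hat{\pi}_j^{N_{j-1}}(\xi),\,\pi_j(\xi)\hat{\pi}_j^{N_{j-1}}(\theta)/\pi_j(\theta)\bigr\}$ and bounding both terms below by $\pi_j(\xi)/M$ using the hypothesis. The only (harmless) difference is that you invoke the classical Doeblin-condition theorem to get the rate $(1-1/M)^n$, whereas the paper derives that bound explicitly via the Chapman--Kolmogorov equation and submultiplicativity of the total variation distance.
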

\begin{proof}
To prove the first part of the theorem we will need the notion of a
\textit{small set} \cite{MeynTweedie}. A set $A\subset\Theta$ is
called a small set if there exists an integer $m>0$ and a
non-trivial measure $\mu_m$ on $\Theta$, such that for all
$\theta\in A$, $B\subset\Theta$:
\begin{equation}\label{small set}
    \mathcal{K}^m(B|\theta)\geq \mu_m(B)
\end{equation}
In this case we say that $A$ is $\mu_m$-small. It can be shown
\cite{MeynTweedie} that a Markov chain is uniformly ergodic if and
only if its state space is $\mu_m$-small for some $m$. Thus, to
prove the theorem, it is enough to show that $\Theta^*_j$ is a small
set.

If (\ref{condition}) is satisfied, than the following holds for
transition kernel (\ref{AIMStransitionkernel}) for $\theta\in\Theta^*_j$ and $B\subset\Theta^*_j$:
\begin{equation}\label{bound for kernel}
\begin{split}
\mathcal{K}_j(B|\theta)&\geq\int_B\sum_{i=1}^{N_{j-1}}
\bar{w}^{(i)}_{j-1}
q_j(\xi|\theta_{j-1}^{(i)})\min\left\{1,\frac{\pi_j(\xi)}{\pi_j(\theta_{j-1}^{(i)})}\right\}
\min\left\{1,\frac{\pi_j(\xi)\hat{\pi}_{j}^{N_{j-1}}(\theta)}{\pi_j(\theta)\hat{\pi}_{j}^{N_{j-1}}(\xi)}\right\}d\xi\\
&=\int_B
\hat{\pi}_j^{N_{j-1}}(\xi)\min\left\{1,\frac{\pi_j(\xi)\hat{\pi}_{j}^{N_{j-1}}(\theta)}{\pi_j(\theta)\hat{\pi}_{j}^{N_{j-1}}(\xi)}\right\}d\xi\\
&=\int_B\min\left\{\hat{\pi}_j^{N_{j-1}}(\xi),\pi_j(\xi)\frac{\hat{\pi}_{j}^{N_{j-1}}(\theta)}{\pi_j(\theta)}\right\}d\xi\\
&\geq\int_B\min\left\{\hat{\pi}_j^{N_{j-1}}(\xi),\frac{\pi_j(\xi)}{M}\right\}d\xi=\frac{1}{M}\int_B\pi_j(\xi)d\xi=\frac{1}{M}\pi_j(B)
\end{split}
\end{equation}
The sample space $\Theta^*_j$ is
therefore $\frac{\pi_j}{M}$-small, and the corresponding Markov chain is
uniformly ergodic.

To prove bound (\ref{speed of convergence}), first observe, using
(\ref{bound for kernel}), that
\begin{equation}\label{n=1}
\|\mathcal{K}_j(\cdot|\theta)-\pi_j(\cdot)\|_{\mathrm{TV}}=\sup_A|\mathcal{K}_j(A|\theta)-\pi_j(A)|\leq\sup_A|\pi_j(A)-\frac{1}{M}\pi_j(A)|=1-\frac{1}{M}
\end{equation}
For $n>1$, using the Chapman-Kolmogorov equation
$\mathcal{K}^{m+n}(A|\theta)=\int_\Theta
\mathcal{K}^m(A|\xi)\mathcal{K}^n(d\xi|\theta)$ and stationarity of
$\pi_j(\cdot)$ with respect to $\mathcal{K}_j(\cdot|\cdot)$, we
have:
\begin{equation}\label{n>1}
\begin{split}
    \|\mathcal{K}_j^n(\cdot|\theta)-\pi_j(\cdot)\|_{\mathrm{TV}}&=\sup_A|\mathcal{K}_j^n(A|\theta)-\pi_j(A)|\\
    &=\sup_A\left|\int_{\Theta_j^*}\mathcal{K}_j(A|\xi)\mathcal{K}_j^{n-1}(d\xi|\theta)-\int_{\Theta_j^*}\mathcal{K}_j(A|\xi)\pi_j(\xi)d\xi\right|\\
    &=\sup_A\left|\int_{\Theta_j^*}\mathcal{K}_j(A|\xi)\left[\mathcal{K}_j^{n-1}(d\xi|\theta)-\pi_j(\xi)d\xi\right]\right|\\
    &=\sup_A\left|\int_{\Theta_j^*}\left[\mathcal{K}_j(A|\xi)-\pi_j(A)\right]\left[\mathcal{K}_j^{n-1}(d\xi|\theta)-\pi_j(\xi)d\xi\right]\right|,
 \end{split}
\end{equation}
where the last equality follows from the fact that
$\int_{\Theta_j^*}\mathcal{K}_j^{n-1}(d\xi|\theta)=\int_{\Theta_j^*}\pi_j(\xi)d\xi=1$.
Finally, we obtain:
\begin{equation}\label{final}
\begin{split}
\|\mathcal{K}_j^n(\cdot|\theta)-\pi_j(\cdot)\|_{\mathrm{TV}}&
\leq\sup_B\sup_A\left|\int_{B}\left[\mathcal{K}_j(A|\xi)-\pi_j(A)\right]\left[\mathcal{K}_j^{n-1}(d\xi|\theta)-\pi_j(\xi)d\xi\right]\right|\\
&\leq\sup_B\left|\int_B\sup_A|\mathcal{K}_j(A|\xi)-\pi_j(A)|\left[\mathcal{K}_j^{n-1}(d\xi|\theta)-\pi_j(\xi)d\xi\right]\right|\\
&=
\|\mathcal{K}_j(\cdot|\theta)-\pi_j(\cdot)\|_{\mathrm{TV}}\cdot\|\mathcal{K}_j^{n-1}(\cdot|\theta)-\pi_j(\cdot)\|_{\mathrm{TV}}
\leq
\left(1-\frac{1}{M}\right)^{n}
\end{split}
\end{equation}
\end{proof}

\begin{remark} Note that if there exists a constant $M$ such that (\ref{condition}) holds for all $\theta\in\Theta_j^*$, then $M>1$ automatically.
\end{remark}

\begin{corollary} If $\Theta\subset\mathbb{R}^d$ is a compact set and $q_j(\cdot|\xi)$ is a Gaussian distribution centered at $\xi$,
then the AIMS algorithm at annealing level $j$ produces a uniformly
ergodic chain and (\ref{speed of convergence}) holds with $M$ given
by \begin{equation}\label{M}
    M=\left(\sum_{i=1}^{N_{j-1}}
\bar{w}^{(i)}_{j-1}\frac{\min_{\theta\in\Theta}q_j(\theta|\theta_{j-1}^{(i)})}{\max_{\theta\in\Theta}\pi_j(\theta)}\right)^{-1}
\end{equation}\label{corollary}
\end{corollary}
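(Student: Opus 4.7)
The plan is to reduce this to Theorem 2 by producing an explicit constant $M$ satisfying the domination condition (\ref{condition}) and then checking that, under the compactness and Gaussian hypotheses, this $M$ is finite and strictly positive-inverse so that the bound (\ref{speed of convergence}) follows immediately. In other words, the whole task is to lower-bound $\hat{\pi}_j^{N_{j-1}}(\theta)$ by a multiple of $\pi_j(\theta)$ uniformly on $\Theta_j^*$.

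The key computation I would perform is the following chain of inequalities, starting from the definition (\ref{formaldefinition}) of $\hat{\pi}_j^{N_{j-1}}$. For any $\theta \in \Theta_j^*$, since $\pi_j(\theta^{(i)}_{j-1}) \leq \max_{\theta\in\Theta}\pi_j(\theta)$, we have
\begin{equation*}
\min\left\{1,\frac{\pi_j(\theta)}{\pi_j(\theta^{(i)}_{j-1})}\right\} \;\geq\; \min\left\{1,\frac{\pi_j(\theta)}{\max_{\theta\in\Theta}\pi_j(\theta)}\right\} \;=\; \frac{\pi_j(\theta)}{\max_{\theta\in\Theta}\pi_j(\theta)},
\end{equation*}
while $q_j(\theta|\theta^{(i)}_{j-1}) \geq \min_{\theta\in\Theta} q_j(\theta|\theta^{(i)}_{j-1})$ trivially. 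Plugging both bounds into (\ref{formaldefinition}) and pulling out the factor $\pi_j(\theta)$ yields
\begin{equation*}
\hat{\pi}_j^{N_{j-1}}(\theta) \;\geq\; \pi_j(\theta)\sum_{i=1}^{N_{j-1}}\bar{w}^{(i)}_{j-1}\frac{\min_{\theta\in\Theta}q_j(\theta|\theta^{(i)}_{j-1})}{\max_{\theta\in\Theta}\pi_j(\theta)} \;=\; \frac{\pi_j(\theta)}{M},
\end{equation*}
with $M$ exactly as in (\ref{M}). This is precisely condition (\ref{condition}), so Theorem 2 delivers both uniform ergodicity and the geometric bound (\ref{speed of convergence}).

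The main thing to verify is that the $M$ so defined is well-defined and finite, which is where the hypotheses of the corollary enter. Since $\Theta \subset \mathbb{R}^d$ is compact and $\pi_j$ is continuous (as $\pi_0$ and $L$ are assumed to be so in the standard Bayesian setting), the maximum $\max_{\theta\in\Theta}\pi_j(\theta)$ is attained and finite. Likewise, each $q_j(\cdot|\theta^{(i)}_{j-1})$ is a Gaussian density, which is continuous and strictly positive everywhere; hence its minimum over the compact set $\Theta$ is attained and strictly positive. Consequently the sum defining $M^{-1}$ is a strictly positive finite number, so $M \in (1,\infty)$.

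I expect the only subtle point to be this finiteness verification: the corollary's compactness assumption is precisely what prevents $\min_{\theta\in\Theta} q_j(\theta|\theta^{(i)}_{j-1})$ from degenerating to zero (a Gaussian tail phenomenon one would face on all of $\mathbb{R}^d$) and, similarly, what guarantees that $\max_{\theta\in\Theta}\pi_j(\theta)$ is finite. Without compactness, the same algebraic bound goes through but $M$ need not be finite, which is exactly why the corollary is stated as it is. Once finiteness is in hand, the geometric rate $(1-1/M)^n$ follows directly by invoking Theorem 2, with no further work required.
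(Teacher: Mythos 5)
Your proposal is correct and follows essentially the same route as the paper: reduce to Theorem 2 by lower-bounding $\hat{\pi}_j^{N_{j-1}}(\theta)$ via $q_j(\theta|\theta_{j-1}^{(i)})\geq\min_{\theta\in\Theta}q_j(\theta|\theta_{j-1}^{(i)})$ and a bound on the $\min\{1,\cdot\}$ factor, arriving at exactly the constant (\ref{M}). The only cosmetic difference is that the paper first rewrites $\min\{1,a/b\}$ as $(a/b)\min\{1,b/a\}$ before bounding, whereas you bound the minimum directly; your added remarks on why compactness and Gaussianity make $M$ finite are a sensible elaboration of what the paper leaves implicit.
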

\begin{proof}
Let us show that in this case condition (\ref{condition}) is always
fulfilled. For any $\theta\in\Theta_j^*$ we have:
\begin{equation}\label{pi hat}
\begin{split}
\hat{\pi}_j^{N_{j-1}}(\theta)&=\sum_{i=1}^{N_{j-1}}
\bar{w}^{(i)}_{j-1}
q_j(\theta|\theta_{j-1}^{(i)})\min\left\{1,\frac{\pi_j(\theta)}{\pi_j(\theta_{j-1}^{(i)})}\right\}\\
&=\sum_{i=1}^{N_{j-1}} \bar{w}^{(i)}_{j-1}
q_j(\theta|\theta_{j-1}^{(i)})\frac{\pi_j(\theta)}{\pi_j(\theta_{j-1}^{(i)})}\min\left\{1,\frac{\pi_j(\theta_{j-1}^{(i)})}{\pi_j(\theta)}\right\}\\
&\geq \pi_j(\theta) \sum_{i=1}^{N_{j-1}}
\bar{w}^{(i)}_{j-1}\frac{\min_{\theta\in\Theta}q_j(\theta|\theta_{j-1}^{(i)})}{\pi_j(\theta_{j-1}^{(i)})}
\min\left\{1,\frac{\pi_j(\theta_{j-1}^{(i)})}{\max_{\theta\in\Theta}\pi_j(\theta)}\right\}\\
&=\pi_j(\theta) \sum_{i=1}^{N_{j-1}}
\bar{w}^{(i)}_{j-1}\frac{\min_{\theta\in\Theta}q_j(\theta|\theta_{j-1}^{(i)})}{\max_{\theta\in\Theta}\pi_j(\theta)}
\end{split}
\end{equation}
Thus, (\ref{condition}) holds with $M$ given by (\ref{M}).
\end{proof}

\begin{remark} Note than the assumption of compactness of the sample space $\Theta$ is not very restrictive and
is typically satisfied in most Bayesian statistics problems. Indeed,
to fulfill this condition, it is enough to take a prior distribution
$\pi_0(\cdot)$ with compact support. Next, it is clear from the proof, that the conclusion of Corollary
\ref{corollary} holds for different ``reasonable'' (not only
Gaussian) proposal distributions $q_j(\cdot|\xi)$. Therefore, the
AIMS algorithm will produce a uniformly ergodic Markov chain in many
practical cases.
\end{remark}

It has been recognized for a long time that, when using an MCMC
algorithm, it is useful to monitor its acceptance rate
$\mathcal{\bar{A}}$, i.e. expected probability of having a proper
Markov jump
$\theta^{(i)}$ to $\theta^{(i+1)}\neq\theta^{(i)}$. While
in the case of the RWMH algorithm, the finding of the optimal acceptance
rate is a difficult problem: neither high nor low
$\mathcal{\bar{A}}$ is good \cite{Gelman}; for IMH the picture is rather simple:
the higher $\mathcal{\bar{A}}$, the better \cite{RobCas}. Since
AIMS is based on the IMH algorithm, their properties are very
similar. In particular, one should aim for the highest possible
acceptance rate of the global candidate state $\xi_g$ when
implementing AIMS.

We finish this section with a result that provides bounds for the
acceptance rate of the AIMS algorithms. These bounds can be useful for finding the optimal implementation parameters.

\begin{theorem}\label{theorem3} Let $\mathcal{\bar{A}}_j$ be the expected
probability of having a proper Markov transition associated with the
AIMS algorithm at annealing level $j$. Then
\begin{equation}\label{<}
    \mathcal{\bar{A}}_j\leq
    \sum_{i=1}^{N_{j-1}}\bar{w}_{j-1}^{(i)}a_j(\theta_{j-1}^{(i)}),
\end{equation}
where $a_j(\theta_{j-1}^{(i)})$ is probability (\ref{acceptProb})
associated with having a proper transition under the RWMH transition kernel (\ref{RWMHkernel}). If
(\ref{condition}) holds, then
\begin{equation}\label{>}
    \mathcal{\bar{A}}_j\geq \frac{1}{M}
\end{equation}
\end{theorem}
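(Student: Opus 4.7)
By definition, the expected probability of a proper transition is
\begin{equation*}
\mathcal{\bar{A}}_j = \int_{\Theta_j^*} \mathcal{A}_j(\theta)\,\pi_j(\theta)\,d\theta,
\end{equation*}
with $\mathcal{A}_j(\theta)$ given by (\ref{AIMSacceptanceprob}). Both inequalities will follow by manipulating this double integral.

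For the upper bound (\ref{<}), I plan to use the trivial bound $\min\{1,\cdot\}\le 1$ on the second minimum appearing inside $\mathcal{A}_j(\theta)$, namely the one involving $\pi_j(\xi)\hat{\pi}_j^{N_{j-1}}(\theta)/\bigl(\pi_j(\theta)\hat{\pi}_j^{N_{j-1}}(\xi)\bigr)$. This decouples the $\xi$- and $\theta$-integrals. The remaining inner integral $\int_{\Theta_j^*} q_j(\xi|\theta_{j-1}^{(i)})\min\{1,\pi_j(\xi)/\pi_j(\theta_{j-1}^{(i)})\}\,d\xi$ is, modulo the $\pi_j$-null finite set $\Theta\setminus\Theta_j^*$, exactly the definition (\ref{acceptProb}) of $a_j(\theta_{j-1}^{(i)})$. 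Interchanging the finite sum with the integral, and using $\int_{\Theta_j^*}\pi_j(\theta)\,d\theta=1$, yields (\ref{<}).

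For the lower bound (\ref{>}), I plan to reuse the pointwise kernel bound already established in (\ref{bound for kernel}) while proving Theorem 2: under condition (\ref{condition}), $\mathcal{K}_j(B|\theta)\ge \pi_j(B)/M$ for every $\theta\in\Theta_j^*$ and every measurable $B\subset\Theta_j^*$. Choosing $B=\Theta_j^*\setminus\{\theta\}$ and recalling that $\pi_j$ assigns zero measure to the singleton $\{\theta\}$ gives the pointwise bound $\mathcal{A}_j(\theta)\ge 1/M$. Integrating this against $\pi_j$ then immediately produces (\ref{>}).

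I do not expect any substantive obstacle; the argument is essentially a clean corollary of the construction of the kernel (\ref{AIMStransitionkernel}) and the bound (\ref{bound for kernel}) already derived earlier. The only bookkeeping point is to distinguish the restricted state space $\Theta_j^*$ from $\Theta$, but since $\Theta\setminus\Theta_j^*$ consists of finitely many points and $\pi_j$ is absolutely continuous with respect to Lebesgue measure on $\Theta\subseteq\mathbb{R}^d$, this distinction has no effect on any of the integrals involved.
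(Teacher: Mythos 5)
Your proposal is correct. The upper bound argument is essentially identical to the paper's: bound the second $\min$ by $1$, decouple the integrals, and recognize $a_j(\theta_{j-1}^{(i)})$ in the inner integral. For the lower bound, however, you take a genuinely different route. The paper works directly with $\mathcal{\bar{A}}_j=\int\int \pi_j(\theta)\hat{\pi}_{j}^{N_{j-1}}(\xi)\min\{1,\cdot\}\,d\xi\,d\theta$, splits the $\min$ using indicator functions of the two complementary events, exploits the symmetry of the resulting expression to write $\mathcal{\bar{A}}_j$ as twice a single term, and only then applies (\ref{condition}), finishing with the observation that $P\bigl(\pi_j(\xi)/\hat{\pi}_{j}^{N_{j-1}}(\xi)\geq\pi_j(\theta)/\hat{\pi}_{j}^{N_{j-1}}(\theta)\bigr)=1/2$ for $\theta,\xi$ i.i.d.\ from $\pi_j(\cdot)$. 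You instead recycle the minorization bound (\ref{bound for kernel}) from the proof of Theorem 2: taking $B=\Theta_j^*\setminus\{\theta\}$, which carries full $\pi_j$-mass, the continuous part of the kernel integrates to exactly $\mathcal{A}_j(\theta)$, so $\mathcal{A}_j(\theta)\geq\pi_j(B)/M=1/M$ pointwise, and integrating against $\pi_j$ gives (\ref{>}). Your argument is shorter given that Theorem 2 has already been proved, and it delivers a strictly stronger conclusion (a uniform pointwise lower bound on the acceptance probability from every state, not just on its average); the paper's symmetrization argument is self-contained and makes visible the familiar ``$2\times\tfrac{1}{2}$'' structure of IMH acceptance-rate computations, which is why the factor $1/M$ appears there without any slack. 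Both are valid; your bookkeeping remark about $\Theta\setminus\Theta_j^*$ being a $\pi_j$-null finite set is exactly the right justification for identifying the inner integral with $a_j(\theta_{j-1}^{(i)})$ and for discarding the singleton $\{\theta\}$.
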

\begin{proof} For every $\theta\in\Theta_j^*$, the probability $\mathcal{A}_j(\theta)$ of transition
$\theta$ to $\Theta_j^*\setminus\{\theta\}$ is given by
(\ref{AIMSacceptanceprob}). For its expected value we have:
\begin{equation}\label{1stbound}
\begin{split}
\mathcal{\bar{A}}_j&=\int_{\Theta_j^*}\pi_j(\theta)\mathcal{A}_j(\theta)d\theta\\
&=\int_{\Theta_j^*}\int_{\Theta_j^*}\pi_j(\theta)\sum_{i=1}^{N_{j-1}}
    \bar{w}^{(i)}_{j-1}
q_j(\xi|\theta_{j-1}^{(i)})\min\left\{1,\frac{\pi_j(\xi)}{\pi_j(\theta_{j-1}^{(i)})}\right\}
\min\left\{1,\frac{\pi_j(\xi)\hat{\pi}_{j}^{N_{j-1}}(\theta)}{\pi_j(\theta)\hat{\pi}_{j}^{N_{j-1}}(\xi)}\right\}d\xi
d\theta\\
&\leq\int_{\Theta_j^*}\int_{\Theta_j^*}\pi_j(\theta)\sum_{i=1}^{N_{j-1}}
    \bar{w}^{(i)}_{j-1}
q_j(\xi|\theta_{j-1}^{(i)})\min\left\{1,\frac{\pi_j(\xi)}{\pi_j(\theta_{j-1}^{(i)})}\right\}d\xi
d\theta\\
&=\int_{\Theta_j^*}\pi_j(\theta)\sum_{i=1}^{N_{j-1}}
    \bar{w}^{(i)}_{j-1}a_j(\theta_{j-1}^{(i)})d\theta=\sum_{i=1}^{N_{j-1}}
    \bar{w}^{(i)}_{j-1}a_j(\theta_{j-1}^{(i)})
\end{split}
\end{equation}
To prove the lower bound (\ref{>}), we use (\ref{formaldefinition}) in the equation defining $\mathcal{\bar{A}}_j$:
\begin{equation}\label{2ndbound}
\begin{split}
\mathcal{\bar{A}}_j&=\int_{\Theta_j^*}\int_{\Theta_j^*}\pi_j(\theta)\hat{\pi}_{j}^{N_{j-1}}(\xi)
\min\left\{1,\frac{\pi_j(\xi)\hat{\pi}_{j}^{N_{j-1}}(\theta)}{\pi_j(\theta)\hat{\pi}_{j}^{N_{j-1}}(\xi)}\right\}
d\xi d\theta\\
&=\int_{\Theta_j^*}\int_{\Theta_j^*}
\pi_j(\theta)\hat{\pi}_{j}^{N_{j-1}}(\xi)I\left(
\frac{\pi_j(\xi)\hat{\pi}_{j}^{N_{j-1}}(\theta)}{\pi_j(\theta)\hat{\pi}_{j}^{N_{j-1}}(\xi)}\geq1\right)d\xi
d\theta\\
&+
\int_{\Theta_j^*}\int_{\Theta_j^*}
\pi_j(\theta)\hat{\pi}_{j}^{N_{j-1}}(\xi)I\left(
\frac{\pi_j(\theta)\hat{\pi}_{j}^{N_{j-1}}(\xi)}{\pi_j(\xi)\hat{\pi}_{j}^{N_{j-1}}(\theta)}\geq1\right)\frac{\pi_j(\xi)\hat{\pi}_{j}^{N_{j-1}}(\theta)}{\pi_j(\theta)\hat{\pi}_{j}^{N_{j-1}}(\xi)}d\xi
d\theta\\
&=2\int_{\Theta_j^*}\int_{\Theta_j^*}
\pi_j(\theta)\hat{\pi}_{j}^{N_{j-1}}(\xi)I\left(
\frac{\pi_j(\xi)\hat{\pi}_{j}^{N_{j-1}}(\theta)}{\pi_j(\theta)\hat{\pi}_{j}^{N_{j-1}}(\xi)}\geq1\right)d\xi
d\theta\\
&\geq2\int_{\Theta_j^*}\int_{\Theta_j^*}
\pi_j(\theta)\frac{\pi_j(\xi)}{M}I\left(
\frac{\pi_j(\xi)}{\hat{\pi}_{j}^{N_{j-1}}(\xi)}
\geq\frac{\pi_j(\theta)}{\hat{\pi}_{j}^{N_{j-1}}(\theta)}\right)d\xi
d\theta\\
&=\frac{2}{M}P\left(\frac{\pi_j(\xi)}{\hat{\pi}_{j}^{N_{j-1}}(\xi)}
\geq\frac{\pi_j(\theta)}{\hat{\pi}_{j}^{N_{j-1}}(\theta)}\right)=\frac{1}{M},
\end{split}
\end{equation}
where the last probability is equal to $1/2$, because $\theta$ and $\xi$ are
i.i.d. according to $\pi_j(\cdot)$, and hence the result.
\end{proof}
\begin{remark} The AIMS algorithm at annealing level $j$ has two accept/reject steps:
one is for the local candidate $\xi_l$ (step 1c) and another is for the
global candidate $\xi_g$ (step 2). The right-hand side of (\ref{<}) is nothing else but the local acceptance rate, i.e.
expected probability of generating a proper local candidate state $\xi_l\notin\{\theta_{j-1}^{(1)},\ldots,\theta_{j-1}^{(N_{j-1})}\}$.
Basically, (\ref{<}) says that the global acceptance rate $\bar{\mathcal{A}}_j$ can never exceed the local acceptance rate.
In fact, it can be deduced directly from the description of the algorithm, since
if the local candidate $\xi_l$ is rejected, then the global candidate $\xi_g$
is automatically rejected and we have a repeated sample
$\theta_j^{(i+1)}=\theta_j^{(i)}$.\label{globallocalAR}
\end{remark}

\section{Illustrative Examples}\label{examples-section}
In this section we illustrate the use of AIMS
with three examples: 1) mixture of ten Gaussian distributions in two dimensions (a multi-modal case);
2) sum of two multivariate Gaussian distributions in higher dimensions; and 3) Bayesian updating of a neural network model.

\subsection{Multi-modal mixture of Gaussians in 2D}\label{example-multimodal}
To demonstrate the efficiency of AIMS for sampling from multi-modal distributions,
consider simulation from a truncated two-dimensional mixture of $M$ Gaussian densities:
\begin{equation}\label{mixture}
    \pi(\theta)\propto \pi_0(\theta)\cdot L(\theta)=\mathcal{U}_{[0,a]\times[0,a]}(\theta)\cdot\sum_{i=1}^M w_i \mathcal{N}(\theta|\mu_i,\sigma^2\mathbb{I}_2),
\end{equation}
where $\mathcal{U}_{[0,a]\times[0,a]}(\cdot)$ denotes the uniform distribution on the square $[0,a]\times[0,a]$.
In this example, $a=10$, $M=10$, $\sigma=0.1$, $w_1=\ldots=w_{10}=0.1$, and the mean vectors $\mu_1,\ldots,\mu_{10}$ are
drawn uniformly from the square $[0,10]\times[0,10]$. Because of our interest in Bayesian updating, we refer to $\pi(\cdot)$
in (\ref{mixture}) as a posterior distribution.

Figure \ref{posterior samples}(a) displays the scatterplot of $10^3$ posterior samples obtained from AIMS.
Notice there are two clusters of samples that overlap significantly near $\theta=(4,4)$ that reflect two closely spaced
Gaussian densities but the other $8$ clusters are widely spaced.
The parameters of the algorithm were chosen as follows:
sample size $N=10^3$ per annealing level; the threshold for the ESS $\gamma=1/2$; the local proposal density
$q_j(\cdot|\xi)=\mathcal{N}(\cdot|\xi,c^2\mathbb{I}_2)$, with $c=0.2$. The trajectory of the corresponding posterior Markov chain,
i.e. the chain generated at the last annealing level with stationary distribution $\pi(\cdot)$,
is shown in Figure~\ref{posterior samples}(b). Black crosses $\times$ represent the mean vectors $\mu_1,\ldots,\mu_{10}$.
As expected, the chain does not exhibit a local random walk behavior
and it moves freely between well-separated modes of the posterior distribution.

The described implementation of AIMS leads to a total number of $m=6$ intermediate distributions in the annealing scheme.
Figure \ref{beta} shows how annealing parameter $\beta_j$ changes as a function of $j$ for $50$ independent runs of the algorithm.
It is found that in all considered examples, $\beta_j$ grows exponentially with $j$.

Let us now compare the performance of AIMS with the Random Walk Metropolis-Hastings algorithm.
For a fair comparison, the Metropolis-Hastings algorithm was implemented as follows. First, a sample of $N_0=10^3$ points
$\theta_0^{(1)},\ldots,\theta_0^{(N_0)}$ was drawn from the prior distribution $\pi_0(\cdot)=\mathcal{U}_{[0,a]\times[0,a]}(\cdot)$ and
the corresponding values of the likelihood function $L(\theta)=\sum_{i=1}^M w_i \mathcal{N}(\theta|\mu_i,\sigma^2\mathbb{I}_2)$ were calculated,
$L_i=L(\theta_0^{(i)})$. Then, starting from the point with the largest likelihood, $\theta^{(1)}=\theta^{(k)}_0$, $k=\arg\max L_i$,
a Markov chain $\theta^{(1)},\ldots,\theta^{(N)}$, with stationary distribution $\pi(\cdot)$ was generated
using the Metropolis-Hastings algorithm. The proposal distribution used was $q(\cdot|\xi)=N(\cdot|\xi,c^2\mathbb{I}_2)$ with $c=0.2$,
and the length of the chain was $N=5\cdot10^3$. Thus, the total number of samples used in both AIMS and RWMH was $N_{t}=6\cdot10^3$.
The scatterplot of posterior samples obtained from RWMH and the trajectory of the corresponding Markov chain are show in
Figures~\ref{posterior samples}(c) and \ref{posterior samples}(d), respectively.
While the AIMS algorithm successfully sampled all $10$ modes with the approximately correct proportion of total samples,
RWHM completely missed $7$ modes.

Suppose that we are interested in estimating the posterior mean vector, $\mu^{\pi}=(\mu^{\pi}_1,\mu^{\pi}_2)$, and
the components $(\sigma_1^{\pi})^2, (\sigma_2^{\pi})^2, \sigma_{12}^{\pi}$ of the posterior covariance matrix $\Sigma^{\pi}$.
Their true values are given in Table \ref{true values} along with the AIMS estimates in terms of their means and coefficients of variation
averaged over 50 independent simulations, all based on $10^3$ posterior samples.

Figure \ref{Mean_square_error} displays the mean square error (MSE) of the
AIMS estimator for the posterior mean and covariance matrix for different values of the scaling factor $c$.
The MSE was estimated based on $50$ independent runs of the algorithm.
An interesting observation is that the MSE as a function of $c$ is nearly flat around the optimal, $c_{\mathrm{opt}}\approx0.15$,
i.e. the one that minimizes the MSE.

\subsection{Mixture of two higher-dimensional Gaussians}\label{multGauss}

To demonstrate the efficiency of AIMS for higher dimensionality,
consider simulation from a truncated  sum of two multivariate Gaussian densities:
\begin{equation}\label{sum of two Gaussian}
     \pi^d(\theta)\propto \pi_0^d(\theta)\cdot L^d(\theta)=\mathcal{U}_{[-a,a]^d}(\theta)\cdot
     \left(\mathcal{N}(\theta|\mu_1,\sigma^2\mathbb{I}_d) + \mathcal{N}(\theta|\mu_2,\sigma^2\mathbb{I}_d)\right),
\end{equation}
where $a=2$, $\mu_1=(0.5,\ldots,0.5)$, $\mu_2=(-0.5,\ldots,-0.5)$, and $\sigma=0.5$. Thus,
$\pi^d(\cdot)$ is a bimodal distribution on a $d$-dimensional cube $[-a,a]^d$. Suppose that a quantity of
interest is the function $h: [-a,a]^d\rightarrow[-a,a]$ that gives the largest component of $\theta=(\theta_1,\ldots,\theta_d)\in[-a,a]^d:$
\begin{equation}\label{h}
    h(\theta)=\max\{\theta_1,\ldots,\theta_d\}
\end{equation}
and we want to estimate its expectation with respect to $\pi^d(\cdot)$ using posterior samples $\theta^{(1)},\ldots,\theta^{(N)}\sim\pi^d(\cdot)$
as follows:
\begin{equation}\label{estimator}
    \bar{h}=\mathbb{E}_{\pi^d}[h]\approx \hat{h}_N=\frac{1}{N}\sum_{i=1}^N h(\theta^{(i)})
\end{equation}
This example is taken from \cite{Ching}, where the Transitional Markov chain Monte Carlo method (TMCMC)
for sampling from posterior densities was introduced.

Here, we consider five cases: $d=2,4,6,10,$ and $20$.  The performance of TMCMC was examined for only the first three cases in \cite{Ching}.
The last two cases are higher dimensional, and, therefore, more challenging.


The details of implementation and simulation results from 50 independent runs are summarized in Table \ref{AMIS vs TMCMC}.
First of all, observe that AIMS outperforms TMCMC, when $d=2,4,6$. Both methods are capable of generating samples
from both modes of the posterior; however, the probabilities of the modes (each is $1/2$ in this example) are found more accurately by AIMS.

\begin{remark} In addition to the first three cases, five other scenarios with different probabilities of modes and different values of $\sigma$
were examined in \cite{Ching}. It is found that AIMS outperforms TMCMC in all these cases too.
\end{remark}

Results presented in Table \ref{AMIS vs TMCMC} help to shed some light on the properties of the optimal scaling parameter $c_{\mathrm{opt}}$
for the proposal density $q_j(\cdot|\xi)=\mathcal{N}(\cdot|\xi,c^2\mathbb{I}_d)$. It appears $c_{\mathrm{opt}}$ depends not only on the
dimension $d$, which is expected, but also on $N$, the number of samples used per each annealing level.
The latter dependence is explained by the fact that the global proposal distribution
$\hat{\pi}_j^{N}(\cdot)$ for the AIMS Markov chain depends both on $N$ and $c$: $\hat{\pi}_j^{N}(\cdot)$
is a weighted sum of $N$ RWMH transition kernels with Gaussian proposal distributions, whose spread is controlled by $c$.
When $N$ is fixed, $c_{\mathrm{opt}}$ is a monotonically increasing function of $d$,
since in higher dimensions, for optimal local exploration of the neighborhoods of $\theta_{j-1}^{(1)},\ldots,\theta_{j-1}^{(N)}$,
we have to be able to make larger local jumps from $\theta_{j-1}^{(k)}$ to $\xi_l$.
When $d$ is fixed, $c_{\mathrm{opt}}$ is a monotonically decreasing function of $N$, since the more samples
$\theta_{j-1}^{(1)},\ldots,\theta_{j-1}^{(N)}$ that have been generated at the previous level, the more we can focus on
local exploration of their neighborhoods
without worrying too much about regions that lie far away. If we think of the support of
$q_j(\cdot|\theta_{j-1}^{(k)})=\mathcal{N}(\cdot|\theta_{j-1}^{(k)},c^2\mathbb{I}_d)$
as lying mostly in a $d$-dimensional ball of radius $c$ centered at $\theta_{j-1}^{(k)}$,
then we can explain the dependence of $c_{\mathrm{opt}}$ on $N$ as follows:
the more $d$-dimensional balls of radius $c$ we have, the smaller $c$ we can use for covering the sample space.

It is interesting to look at how the local and global acceptance rates (see Remark \ref{globallocalAR})
depend on the scaling parameter $c$. Figures \ref{fig1}, \ref{fig2}, and \ref{fig3} display these acceptance rates along with the coefficient of
variation $\delta$ of the AIMS estimator for the first three cases: $d=2,4$ and $6$, based on 50 independent runs.
As expected, the global acceptance rate is always smaller than the local acceptance rate,
and the minimum value of $\delta$ corresponds to the maximum value of the global acceptance rate.
Observe also that the peak of the global acceptance rate slides to the left, when $j$ increases.
This suggests that it is more efficient to use smaller values of $c$ at higher annealing levels.
Indeed, it is natural to expect that $c_{j}^{\mathrm{opt}}>c_{j+1}^{\mathrm{opt}}$, since the intermediate
distribution $\pi_{j+1}(\cdot)$ is more concentrated  than $\pi_{j}(\cdot)$.

Finally, we draw attention to Case $4$ in Table 2 where $d=10$ with $N=10^3$ and $N=2\cdot10^3$ samples per annealing level.
Usually for Monte Carlo based methods, the coefficient of variation $\delta$ of the estimator is proportional to $1/\sqrt{N_t}$, where $N_t$ is the total
number of samples. Thus, the doubling of sample size will result in the reduction of $\delta$ by the factor of $1/\sqrt{2}\approx0.71$.
For AIMS, however, the decrease of $\delta$ is more significant: from $\delta=26.7\%$ to $\delta=12.2\%$, i.e. approximately by the factor
of $0.46$. This is because, as explained in Subsection \ref{impl issues},
the increase of $N$ affects not only the total sample size, but also improves the global proposal distribution $\hat{\pi}_j^N(\cdot)$.
This improvement of $\hat{\pi}_j^N(\cdot)$ results in the generation of less correlated samples at each annealing level and, therefore,
leads to an additional reduction of the coefficient of variation $\delta$.

\subsection{Bayesian updating of a neural network}\label{BNN}

To illustrate the use of AIMS for Bayesian updating, consider its application to a feed-forward neural network model, one of
the most popular and most widely used models for function approximation. The goal is to approximate a (potentially highly nonlinear) function
$f:X\rightarrow \mathbb{R}$, where $X\subset\mathbb{R}^p$ is a compact set, based on a finite number of measurements $y_i=f(x_i)$, 
$i=1,\ldots,n$, by using a finite sum of the form
\begin{equation}\label{FFNN1}
\hat{f}(x,\theta)=\sum_{j=1}^M\alpha_j\Psi(\langle x,\beta_j\rangle+\gamma_j)
\end{equation}
where $\theta$ denotes the model parameters $\alpha_j,\gamma_j\in\mathbb{R}$ and $\beta_j\in\mathbb{R}^p$,
$\langle\cdot,\cdot\rangle$ is the standard scalar product in $\mathbb{R}^p$, and $\Psi$ is a sigmoidal function, the typical
choice being either the logistic function or the $\tanh$ function that is used in this example:
\begin{equation}\label{FFNN2}
    \Psi(z)=\frac{e^z-e^{-z}}{e^z+e^{-z}}.
\end{equation}
Model (\ref{FFNN1}) is called a feed-forward neural network (FFNN) with activation function (\ref{FFNN2}), $p$ input units, one hidden layer with
$M$ hidden units, and one output unit. The parameters $\beta_j$ and $\alpha_j$ are called the connection weights from the input units to the hidden unit $j$
and the connection weights from the hidden unit $j$ to the output unit, respectively. The term $\gamma_j$ is a designated bias of the
hidden unit $j$ and it can be viewed as a connection weight from an additional constant unit input. 
 Schematically, the FFNN model is shown in Figure~\ref{FFNNmodel}.

The rationale behind the FFNN approximation method follows from the universal approximation property of FFNN models \cite{Cybenko,Hornik};
that is, a FFNN with sufficient number of hidden units and properly adjusted connection weights can approximate most functions arbitrarily well.
More precisely, finite sums (\ref{FFNN1}) over all positive integers $M$ are dense in the set of real continuous functions on the $p$-dimensional unit cube.

Let $\mathcal{A}$ denote the FFNN architecture, i.e. the input-output model (\ref{FFNN1}) together with
information about the type of activation function $\Psi$, number of input units $p$, and number of hidden units $M$.
In this example, we use $p=1$, $M=2$, and $\Psi$ is given by (\ref{FFNN2}), so the model parameters
$\theta=(\alpha_1, \alpha_2, \beta_{1}, \beta_{2}, \gamma_{1}, \gamma_{2})\in\Theta=\mathbb{R}^6$.

\textit{Deterministic model} $\mathcal{A}$ of function $f$ given by $\hat{f}(x,\theta)$ in (\ref{FFNN1}) can be used to construct
a \textit{Bayesian (stochastic) model} $\mathcal{M}$ of function $f$ by \textit{stochastic embedding} (see the details in \cite{Beck1,Beck2}).
Recall, that by definition, a Bayesian model $\mathcal{M}$ consists of two components:
\begin{enumerate}
  \item An input-output probability model $y\sim p(y|x,\theta,\mathcal{M})$, which is obtained by introducing the prediction-error
\begin{equation}\label{error}
\varepsilon=y-\hat{f}(x,\theta),
\end{equation}
which is the difference between the true output $y=f(x)$ and the deterministic model output
$\hat{f}(x,\theta)$. 
A probability model for $\varepsilon$ is introduced by using the Principle of Maximum Entropy \cite{Jaynes1,Jaynes2},
which states that the probability model should be selected to produce the most uncertainty subject
to constraints that we wish to impose (the selection of any other probability model would lead to an unjustified
reduction in the prediction uncertainty). In this example, we impose the following constraints:
$\mathbb{E}[\varepsilon]=0$ and $\mbox{var}[\varepsilon]=\sigma^2$ with $\varepsilon$ unbounded.
The maximum entropy PDF for the prediction-error is then
$\varepsilon\sim\mathcal{N}(0,\sigma^2)$. This leads to the following input-output probability model:
\begin{equation}\label{iomodel}
    p(y|x,\theta,\mathcal{M})=\mathcal{N}\left(\left.y \hspace{1mm} \right|\hat{f}(x,\theta), \sigma^2\right)
\end{equation}
Here, the prediction-error variance $\sigma^2$ is included in the set of model parameters where, for convenience, we define $\theta_7=\log\sigma^{-2}$,
so the parameter space is now $\Theta=\mathbb{R}^7$.

  \item A prior PDF $\pi_0(\theta|\mathcal{M})$ over the parameter space which is chosen to quantify
  the initial relative plausibility of each value of $\theta$ in $\Theta$.
  In this example, the prior distributions are assumed to be:
  \begin{equation}\label{priors}
   \alpha_j\sim \mathcal{N}(0,\sigma_{\alpha}^2), \hspace{3mm} \beta_j\sim\mathcal{N}(0,\sigma^2_{\beta}), 
   \hspace{3mm} \gamma_j\sim\mathcal{N}(0,\sigma^2_{\gamma}), \hspace{3mm}\theta_7=\log\sigma^{-2}\sim\mathcal{N}(0,\sigma_{\theta_7}^2),
  \end{equation}
with $\sigma_{\alpha}=\sigma_{\beta}=\sigma_{\gamma}=\sigma_{\theta_7}=5$. Thus, the prior PDF in our case is
\begin{equation}\label{priorPDFex}
\pi_0(\theta|\mathcal{M})=\mathcal{N}(\theta_7|0,\sigma_{\theta_7}^2)\prod_{j=1}^M\mathcal{N}(\alpha_j|0,\sigma_{\alpha}^2)
\mathcal{N}(\beta_{j}|0,\sigma^2_{\beta})\mathcal{N}(\gamma_{j}|0,\sigma^2_{\gamma}).
\end{equation}
\end{enumerate}

Let $\mathcal{D}$ denote the training data, $\mathcal{D}=\{(x_1,y_1),\ldots,(x_n,y_n)\}$, treated as independent samples, then the likelihood function which
expresses the probability of getting data $\mathcal{D}$ based on the probability model (\ref{iomodel}) is given by
\begin{equation}\label{likelihoodfunction}
    L(\theta)=p(\mathcal{D}|\theta,\mathcal{M})=\prod_{i=1}^n p(y_i|x_i,\theta,\mathcal{M})
\end{equation}
In this example, data are synthetically generated from (\ref{iomodel}) with $\alpha_1=5$, $\alpha_2=-5$, $\beta_1=-1$, $\beta_2=-3$,
$\gamma_1=5$, $\gamma_2=2$, $\sigma=0.1$, and the input $x_i=i/10$, for $i=1,\ldots,n=100$.

Finally, using Bayes’ theorem, we can write the posterior PDF $\pi(\theta|\mathcal{D},\mathcal{M})$ for the uncertain model parameters:
\begin{equation}\label{posteriorPDFBNN}
\begin{split}
\pi(\theta|\mathcal{D},\mathcal{M})&\propto \pi_0(\theta|\mathcal{M})\cdot L(\theta)\\
&=\mathcal{N}(\theta_7|0,\sigma_{\theta_7}^2)\prod_{j=1}^M\mathcal{N}(\alpha_j|0,\sigma_{\alpha}^2)
\mathcal{N}(\beta_{j}|0,\sigma^2_{\beta})\mathcal{N}(\gamma_{j}|0,\sigma^2_{\gamma})
\cdot \prod_{i=1}^n p(y_i|x_i,\theta,\mathcal{M})
\end{split}
\end{equation}

Under the Bayesian framework, the mean prediction of $y=f(x)$ from observable $x$
 can be obtained by integrating out the nuisance parameters:
\begin{equation}\label{prediction}
    \mathbb{E}_{\pi}[y|x,\mathcal{D},\mathcal{M}]=\int_\Theta \hat{f}(x,\theta)\pi(\theta|\mathcal{D},\mathcal{M})d\theta
\end{equation}

To demonstrate the efficiency of AIMS for the mean prediction problem, we use it to sample from the posterior PDF (\ref{posteriorPDFBNN})
and use Monte Carlo simulation in (\ref{prediction}).
The parameters of the AIMS algorithm are chosen as follows: sample size $N=3\times10^3$ per annealing level;
the threshold for the ESS $\gamma=1/2$; the proposal density
$q_j(\cdot|\xi)=\mathcal{N}(\cdot|\xi,c^2\mathbb{I}_7)$, with $c=0.5$. This implementation of AIMS leads to a total number
of $m=10$ intermediate distributions in the annealing scheme. The obtained posterior samples $\theta_m^{(1)},\ldots,\theta_m^{(1)}$
are then used to approximate the integral on the right-hand side of (\ref{prediction}):
\begin{equation}\label{approximatedprediction}
\int_\Theta \hat{f}(x,\theta)\pi(\theta|\mathcal{D},\mathcal{M})d\theta\approx \frac{1}{N}\sum_{i=1}^N \hat{f}(x,\theta_m^{(i)})\overset{\underset{\mathrm{def}}{}}{=}
\bar{\hat{f}}_m(x)
\end{equation}
The true function $y=f(x)$  as well as its AIMS approximation $\bar{\hat{f}}_m(x)$ are shown in Figure~\ref{FFNNexample}.
A few ``intermediate approximations'' $\bar{\hat{f}}_j(x)$, which are based on $\theta_j^{(1)},\ldots,\theta_j^{(1)}\sim\pi_j$, are plotted to show how $\bar{\hat{f}}_j(x)$ approaches $f(x)$ when $j\rightarrow m$.
To visualize the uncertainty for the AIMS approximation, we plot its $5$th and $95$th percentiles in Figure~\ref{FFNNexample2}.

\section{Concluding Remarks}\label{finish}
In this paper, a new scheme for sampling from posterior distributions, called Asymptotically Independent Markov Sampling (AIMS), is introduced.
The algorithm is based on three well-established and widely-used stochastic simulation methods: importance sampling, MCMC, and simulated annealing.
The key idea behind AIMS is to use $N$
samples drawn from $\pi_{j-1}(\cdot)$ as an importance sampling density to
construct an approximation $\hat{\pi}_{j}^N(\cdot)$ of
$\pi_{j}(\cdot)$, where $\pi_0(\cdot),\ldots,\pi_m(\cdot)$ is a sequence of intermediate distributions interpolating between the prior $\pi_0(\cdot)$
and posterior $\pi(\cdot)=\pi_m(\cdot)$. This approximation is then employed as the independent
proposal distribution for sampling from $\pi_{j}(\cdot)$ by the independent Metropolis-Hastings algorithm.
When $N\rightarrow\infty$, the AIMS
sampler generates independent draws from the target distribution, hence the name of the
algorithm.

Important ergodic properties of AIMS are derived. In particular, it is shown that, under certain conditions (that are often fulfilled in practice),
the AIMS algorithm produces a uniformly ergodic Markov chain. The choice of the free parameters of the algorithm is discussed
and recommendations  are provide for their values, both theoretically and heuristically based.
The efficiency of AIMS is demonstrated with three examples, which include
both multi-modal and higher-dimensional target posterior distributions.

\section*{Acknowledgements}\label{Acknowledgements}
This work was supported by the National Science Foundation under award
number EAR-0941374 to the California Institute of Technology. This support is gratefully acknowledged.
Any opinions, findings, and conclusions
or recommendations expressed in this material are
those of the authors and do not necessarily reflect
those of the National Science Foundation.

\newpage

\begin{table}
  \begin{center}
    \begin{tabular}{|l|c|c|c|c|c|}
    \hline
    Parameter & $\mu^{\pi}_1$ & $\mu^{\pi}_2$ & $(\sigma_1^{\pi})^2$ & $(\sigma_2^{\pi})^2$ & $\sigma_{12}^{\pi}$ \\
    \hline
    \hline
  True value & 5.23 & 5.75 & 4.51 & 3.37 & -1.30\\
  \hline
  AIMS mean & 5.20 & 5.73 & 4.56 & 3.32 & -1.25\\
  \hline
  AIMS cov & 2.4\% & 2.0\% & 8.2\% & 8.2\% & 27.7\%\\
  \hline
    \end{tabular}
  \end{center}
\caption{\footnotesize True values of the posterior parameters and the AIMS estimates in terms of
their means and coefficients of variation averaged over 50 simulations [Example \ref{example-multimodal}].}\label{true values}
\end{table}

\begin{table}
  \begin{center}
    \begin{tabular}{|c|c|c|c|c|c|c|c|c|}
    \hline
    Case & $d$ & $\bar{h}$ & TMCMC: $\hat{h}_N, (\delta$) & AIMS: $\hat{h}_N, (\delta$)& $N$ & $\gamma$ & $c_{\mathrm{opt}}$ & $\bar{m}$ \\
    \hline
     1 & 2 & 0.29 & 0.28 (12.3\%) & 0.29 (8.8\%) & $10^3$& 1/2 &$0.2$ &3\\
     2 & 4 & 0.51 & 0.54 (10.0\%) & 0.51 (6.9\%) & $10^3$& 1/2 &$0.4$ &4\\
     3 & 6 & 0.64 & 0.65 (15.7\%) & 0.64 (10.4\%) &$10^3$& 1/2 &$0.6$ &4.95\\
     \hline
     4 & 10 & 0.76   & ---  &0.76 (26.7\%) &$10^3$&1/2 &0.7& 5.84\\
       & 10 & 0.76   & ---  &0.76 (12.2\%) &$2\cdot10^3$&1/2 &0.6& 5.98\\
     5 & 20 & 0.92   & --- &0.95 (42.1\%)& $4\cdot10^3$ &1/2 & $0.5$ &  5.58\\
  \hline
    \end{tabular}
  \end{center}
\caption{\footnotesize  Summary of the simulation results: $d$ is the dimension of the sample space;
$\bar{h}$ and $\hat{h}_N$ are the exact value of $\mathbb{E}_{\pi^d}[h]$ and its estimated value, respectively; $\delta$ in parentheses is
the corresponding coefficient of variation; $N$, $\gamma$, $c_{\mathrm{opt}}$, and $\bar{m}$ are the number of samples used per annealing level,
the threshold for the ESS, the (nearly) optimal value of the scaling parameter, and the average number of distributions in the annealing scheme,
respectively. The AIMS results are based on $50$
independent runs. The TMCMC results are taken from \cite{Ching} and are based on 50 independent runs [Example \ref{multGauss}].
}\label{AMIS vs TMCMC}
\end{table}

\begin{figure}\centering
\includegraphics[angle=0,scale=1]{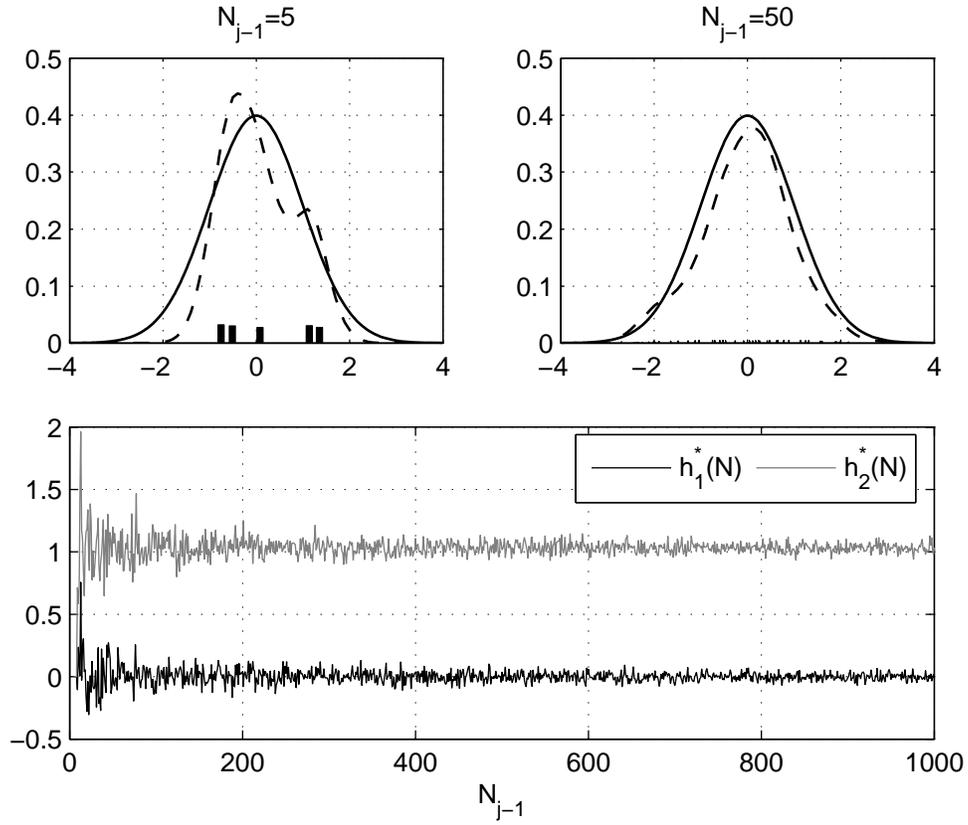}
\caption{\footnotesize The top panels show
the distribution $\pi_j(\cdot)$ (solid lines) and its approximation
$\hat{\pi}_j^{N_{j-1}}(\cdot)$, for $N_{j-1}=5$ (left) and
$N_{j-1}=50$ (right). Dashed lines and bars correspond to the continuous
and discrete parts of $\hat{\pi}_j^{N_{j-1}}(\cdot)$, respectively.
The bottom panel shows the convergence of
$h_1^*(N_{j-1})=\mathbb{E}_{\hat{\pi}_{j}^{N_{j-1}}}[h_1]$ and
$h_2^*(N_{j-1})=\mathbb{E}_{\hat{\pi}_{j}^{N_{j-1}}}[h_2]$ to the true
values, $0$ and $1$, respectively [Example 2.1].} \label{example1}
\end{figure}

\newpage

\begin{figure}\centering
\includegraphics[angle=0,scale=1]{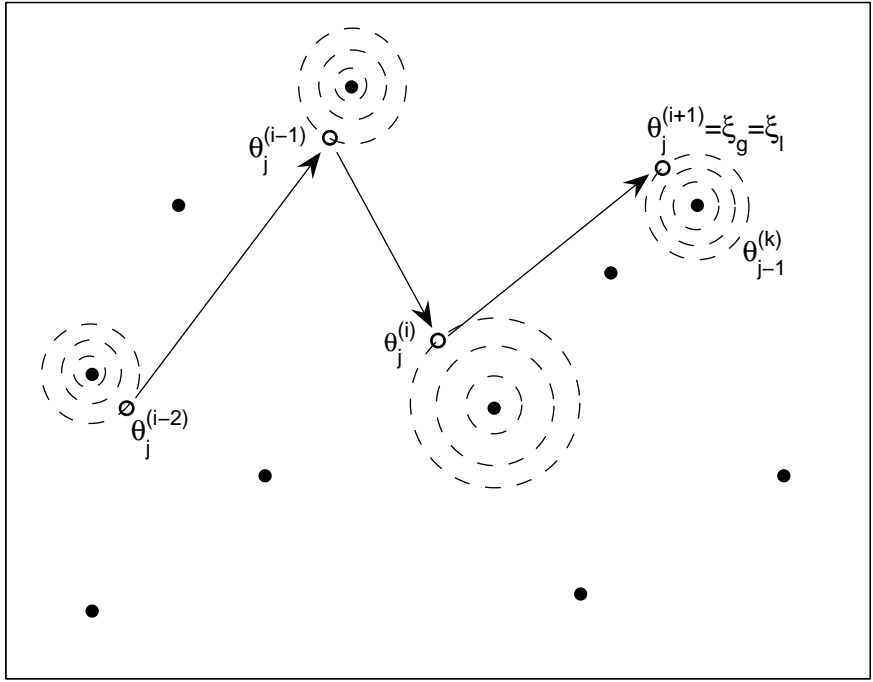}
\caption{\footnotesize AIMS at annealing level $j$: disks $\bullet$
and  circles $\circ$ represent
$\theta_{j-1}^{(1)},\ldots,\theta_{j-1}^{(N_{j-1})}$ and
$\theta_{j}^{(1)},\ldots,\theta_{j}^{(N_{j})}$, respectively;
concentric circles show the correspondence between
$\theta_{j-1}^{(k)}$ that has been chosen in step 1a and the
corresponding local candidate  $\xi_l\sim
q(\cdot|\theta_{j-1}^{(k)})$ that has been generated in step 1b. In
this schematic picture, all shown candidate states are accepted as
new states of the Markov chain.} \label{scheme}
\end{figure}

\newpage

\begin{figure}\centering
\includegraphics[angle=0,scale=1]{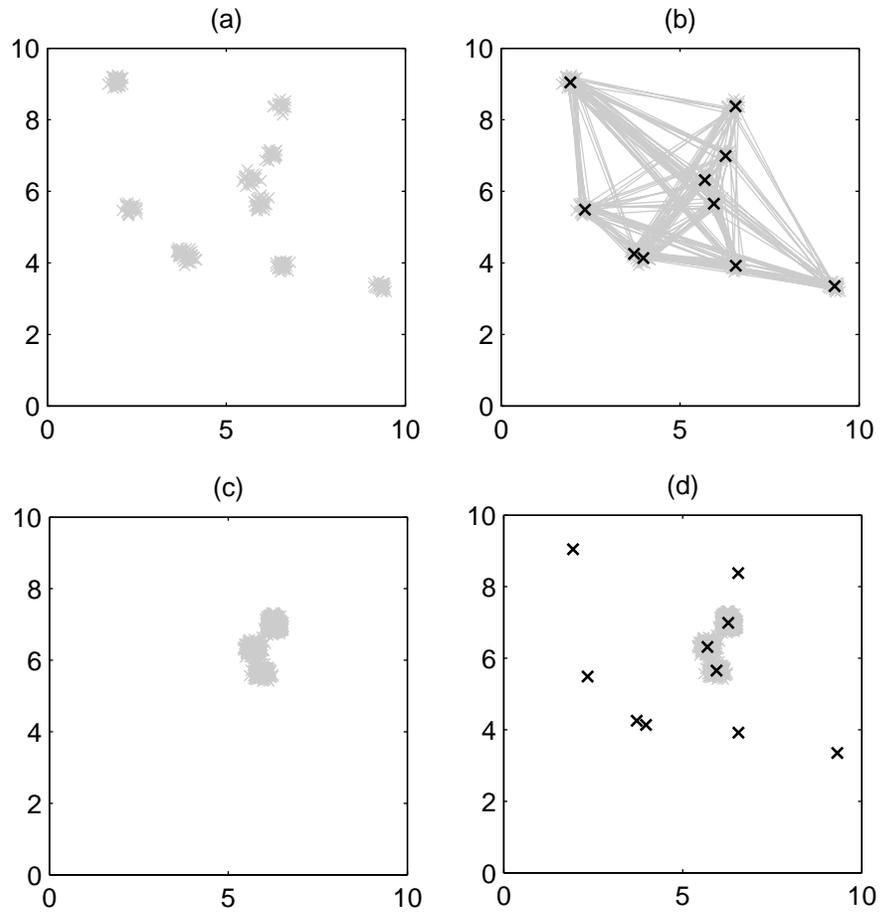}
\caption{\footnotesize (a) Scatterplots of $10^3$ posterior samples; (b) the trajectories of the corresponding posterior Markov chain
obtained from AIMS; and (c), (d) corresponding plots from RWMH.
Black crosses $\times$ represent the modes $\mu_1,\ldots,\mu_{10}$ of $\pi(\cdot)$ [Example 4.1].} \label{posterior samples}
\end{figure}

\newpage

\begin{figure}\centering
\includegraphics[angle=0,scale=1]{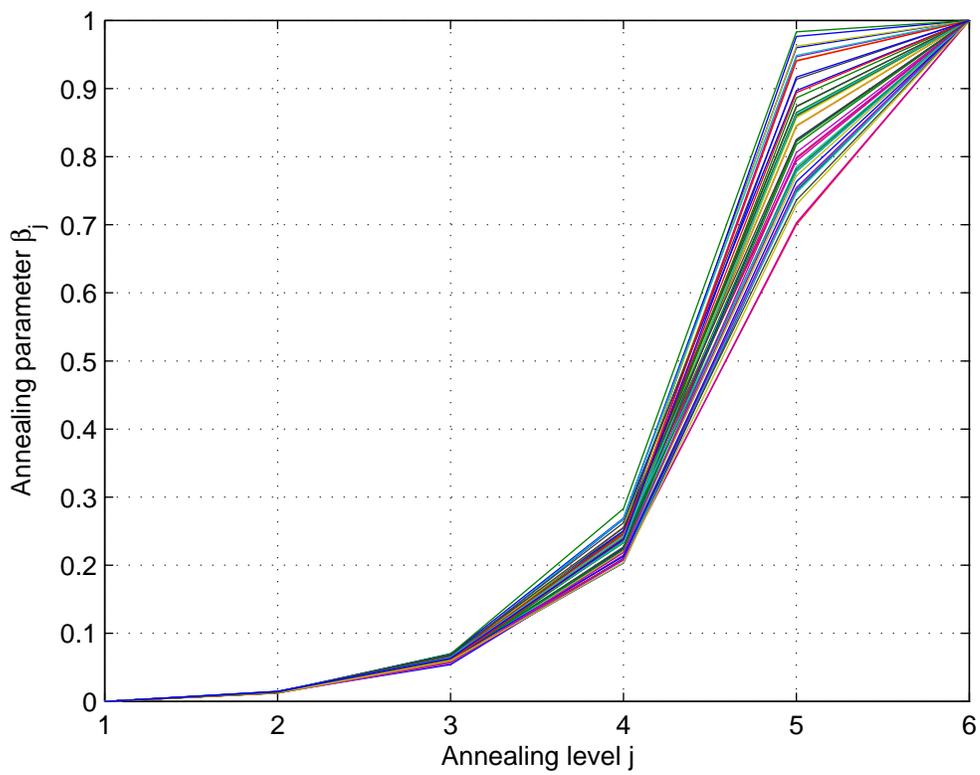}
\caption{\footnotesize Annealing parameter $\beta_j$ as a function of annealing level $j$ for $50$ independent runs of AIMS [Example 4.1].} \label{beta}
\end{figure}

\newpage

\begin{figure}\centering
\includegraphics[angle=0,scale=1]{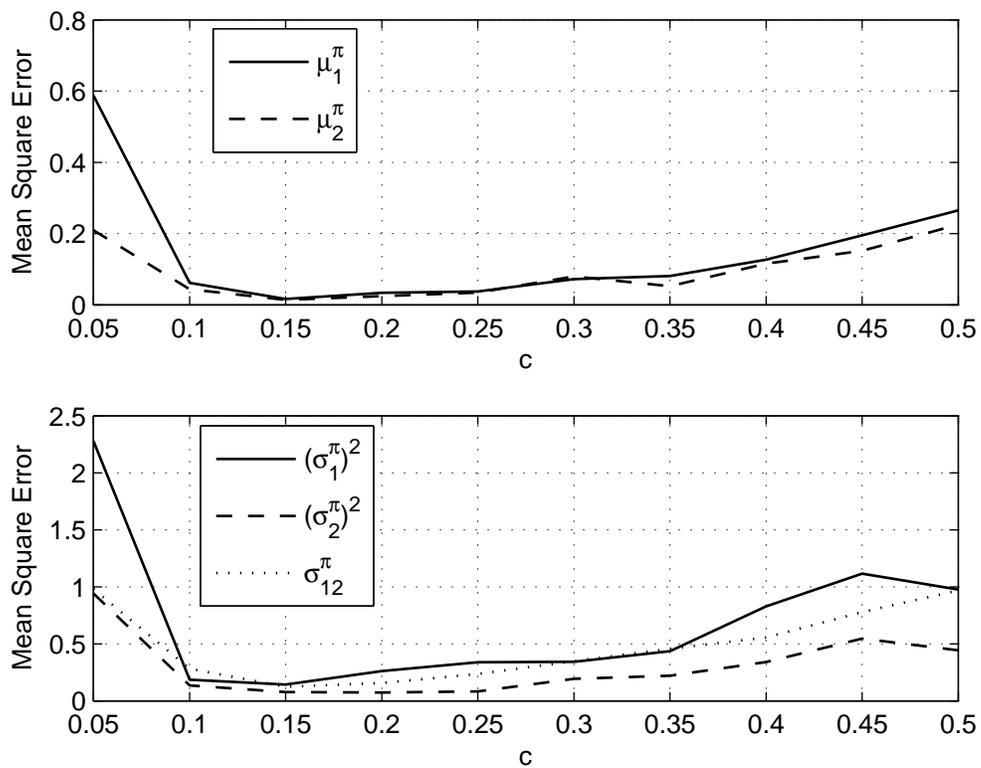}
\caption{\footnotesize Mean square error of the AIMS estimator for the mean and covariance matrix as a function of the scaling factor $c$
showing the optimal value is $c_{\mathrm{opt}}\approx0.15$ [Example \ref{example-multimodal}].} \label{Mean_square_error}
\end{figure}

\newpage

\begin{figure}\centering
\includegraphics[angle=0,scale=0.7]{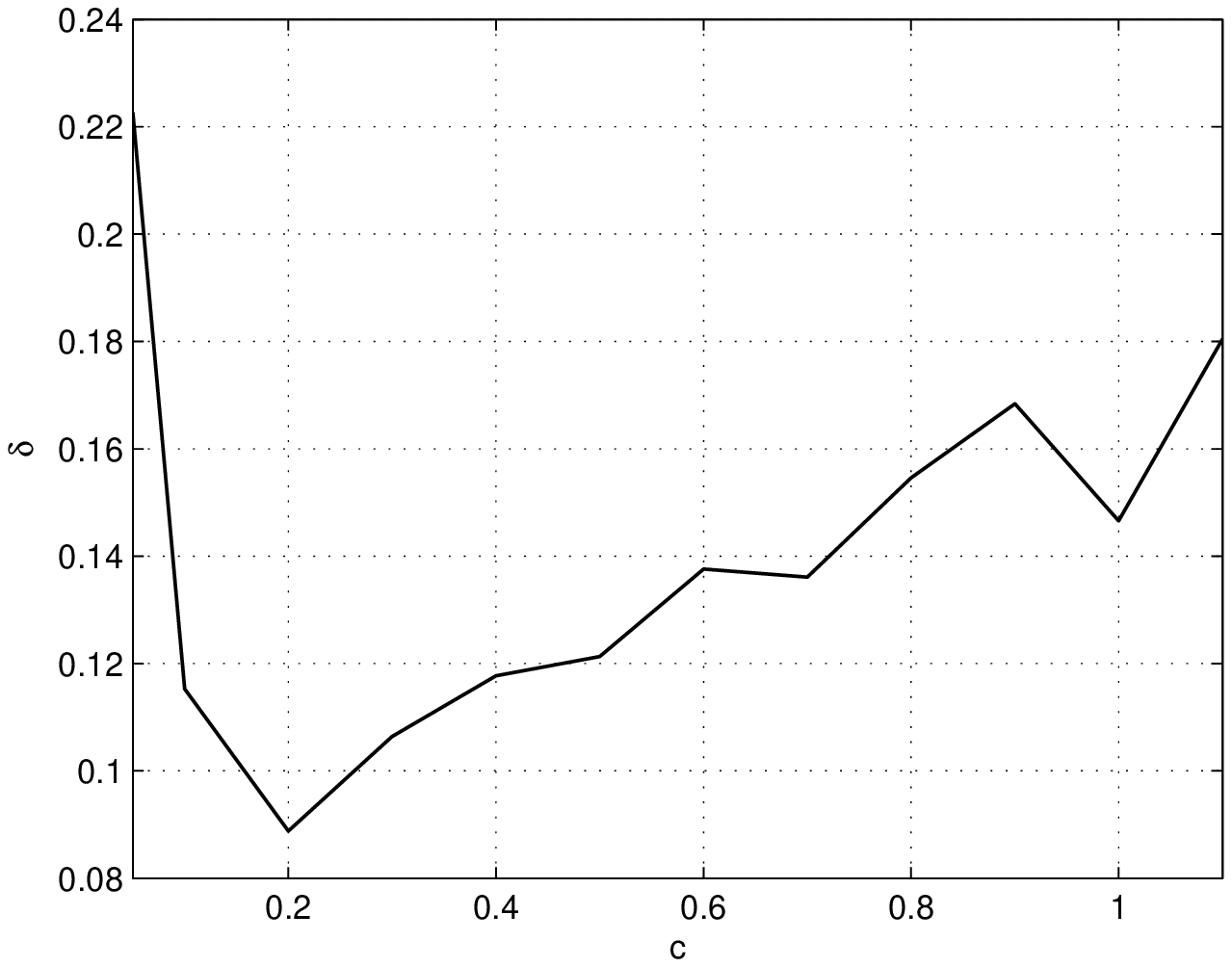}
\includegraphics[angle=0,scale=.7]{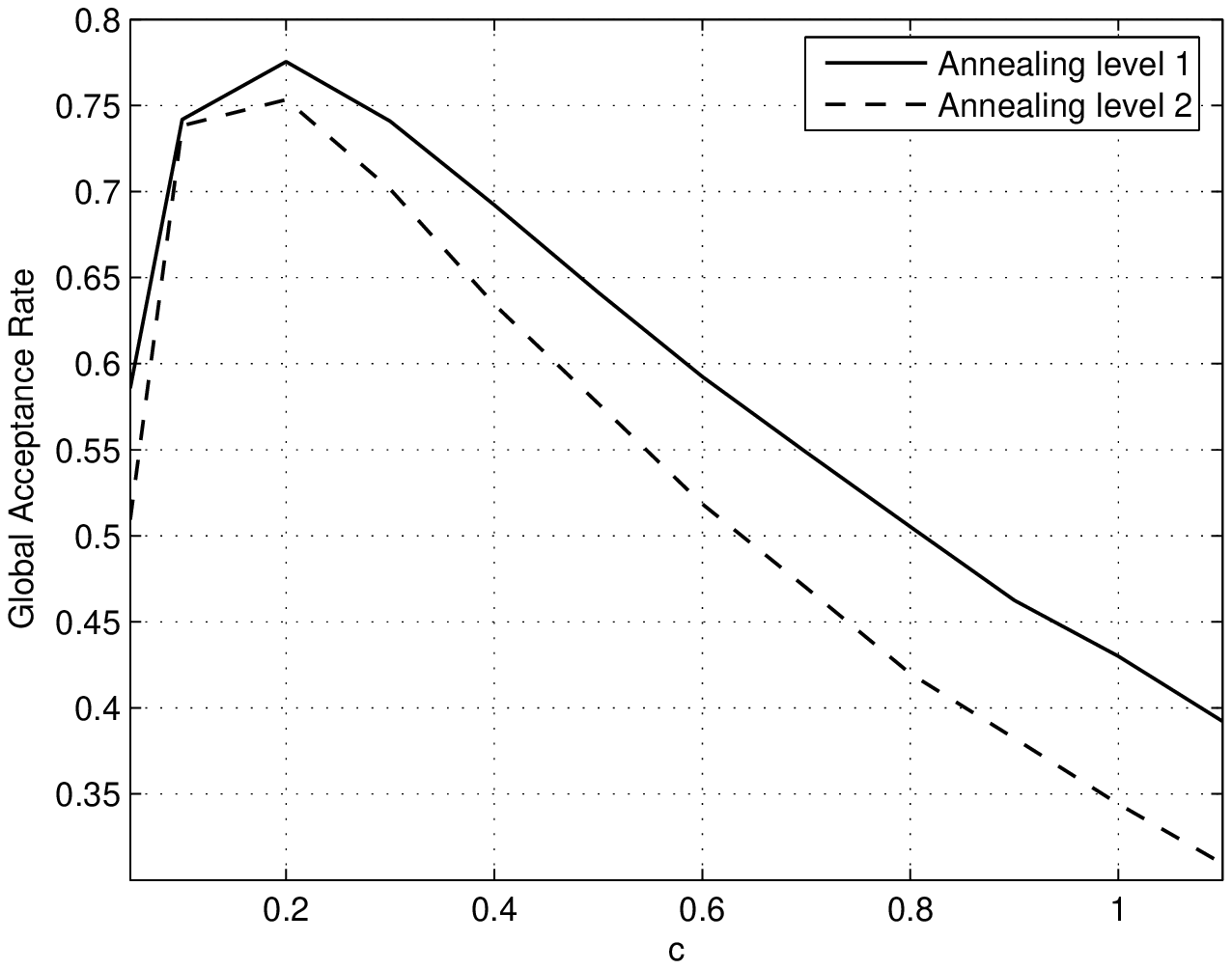}
\includegraphics[angle=0,scale=.7]{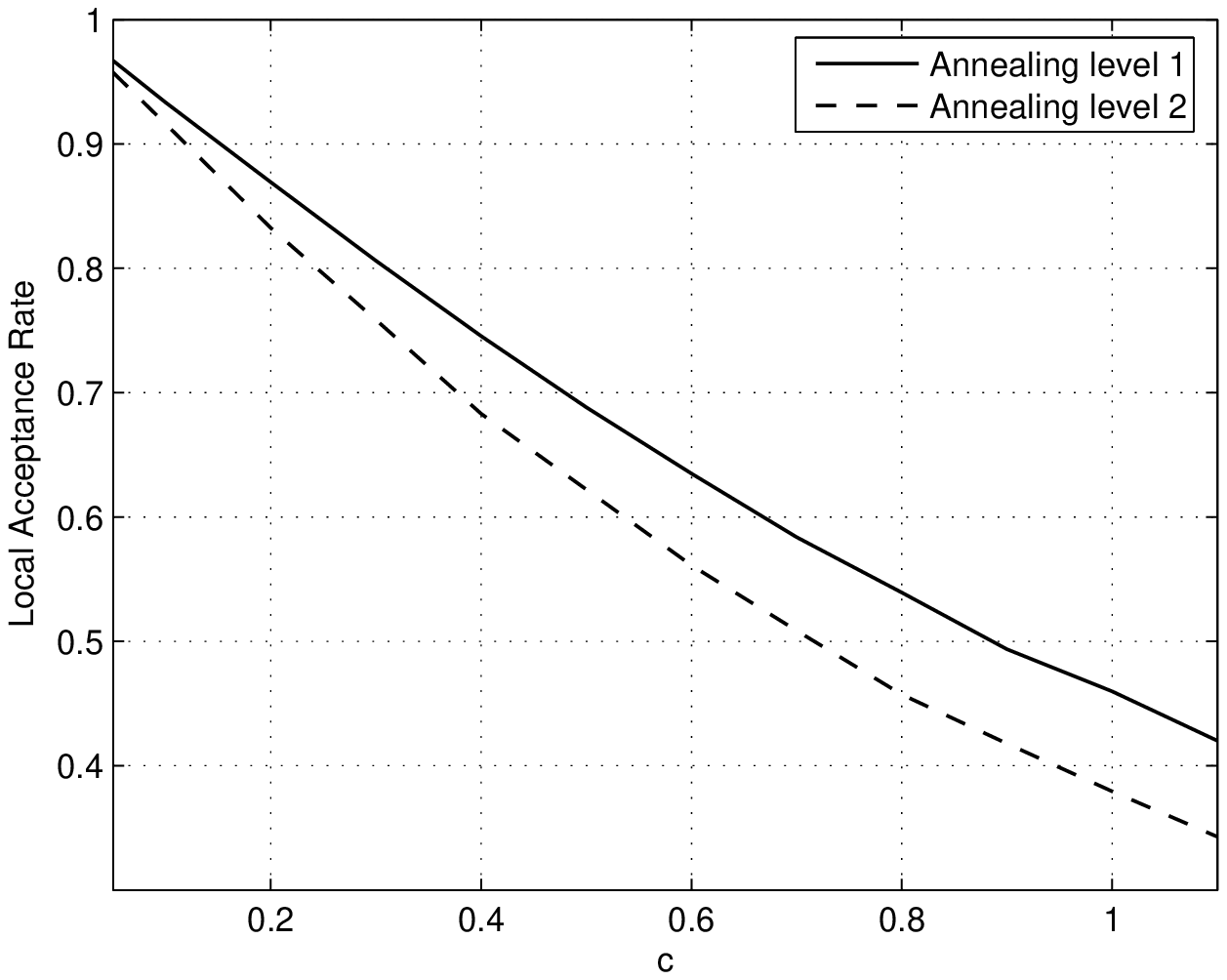}
\caption{\footnotesize Coefficient of variation $\delta$ of the AIMS estimate (top panel), global acceptance rate (middle panel), and
local acceptance rate (bottom panel) as  functions of $c$ for Case 1 ($d=2$) [Example \ref{multGauss}].}\label{fig1}
\end{figure}

\begin{figure}\centering
\includegraphics[angle=0,scale=0.7]{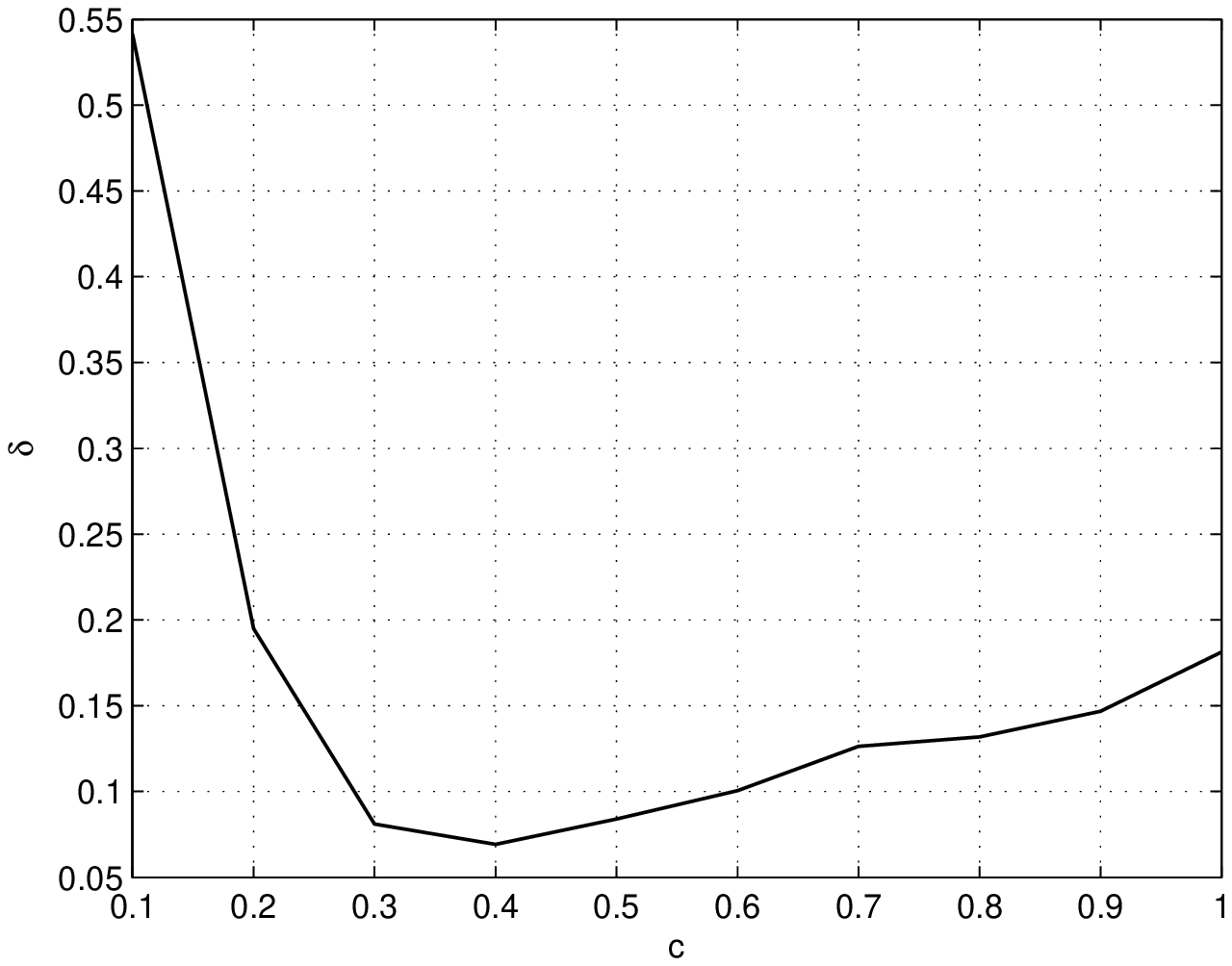}
\includegraphics[angle=0,scale=.7]{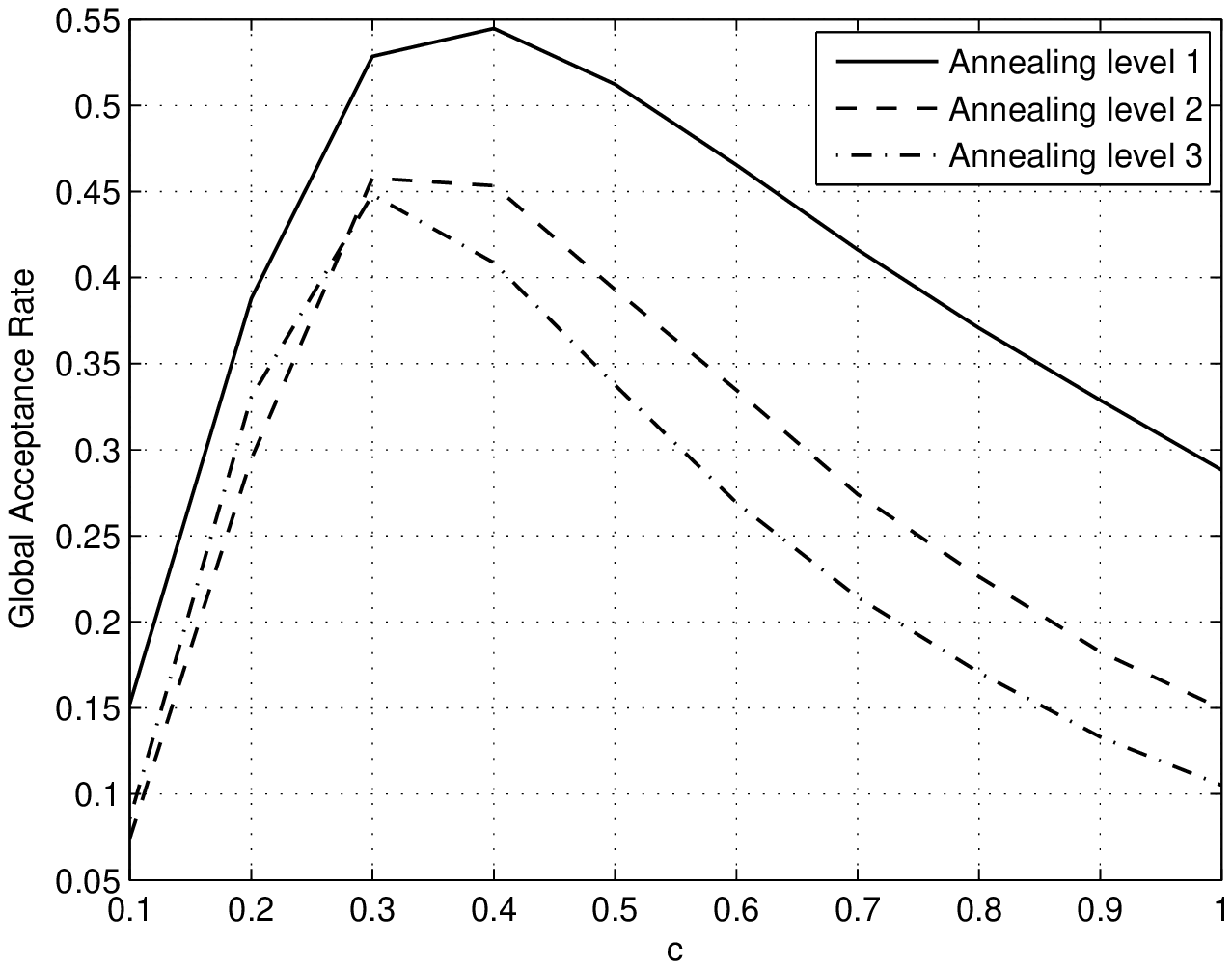}
\includegraphics[angle=0,scale=.7]{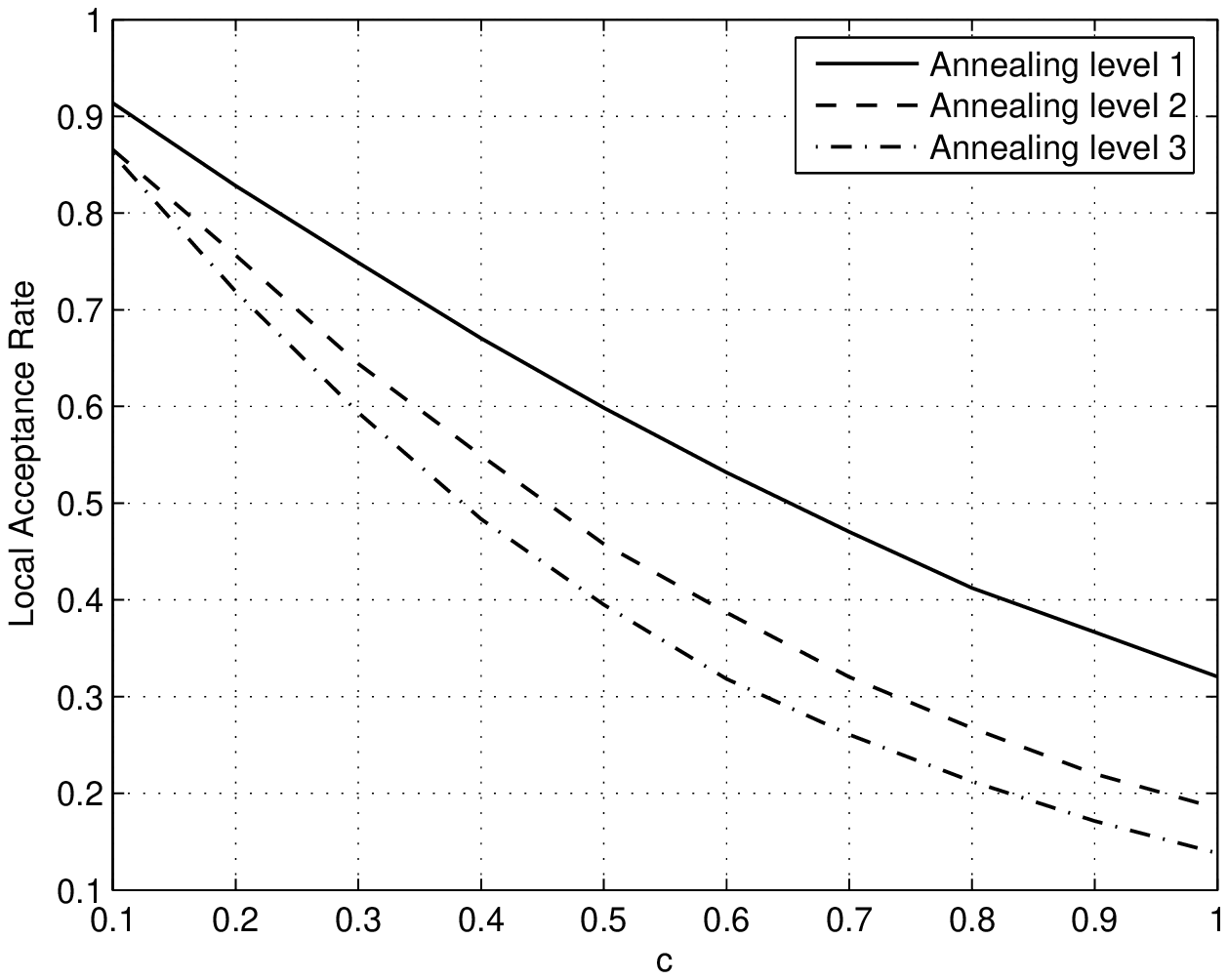}
\caption{\footnotesize Coefficient of variation $\delta$ of the AIMS estimate (top panel), global acceptance rate (middle panel), and
local acceptance rate (bottom panel) as  functions of $c$ for Case 2 ($d=4$) [Example \ref{multGauss}].}\label{fig2}
\end{figure}

\begin{figure}\centering
\includegraphics[angle=0,scale=0.7]{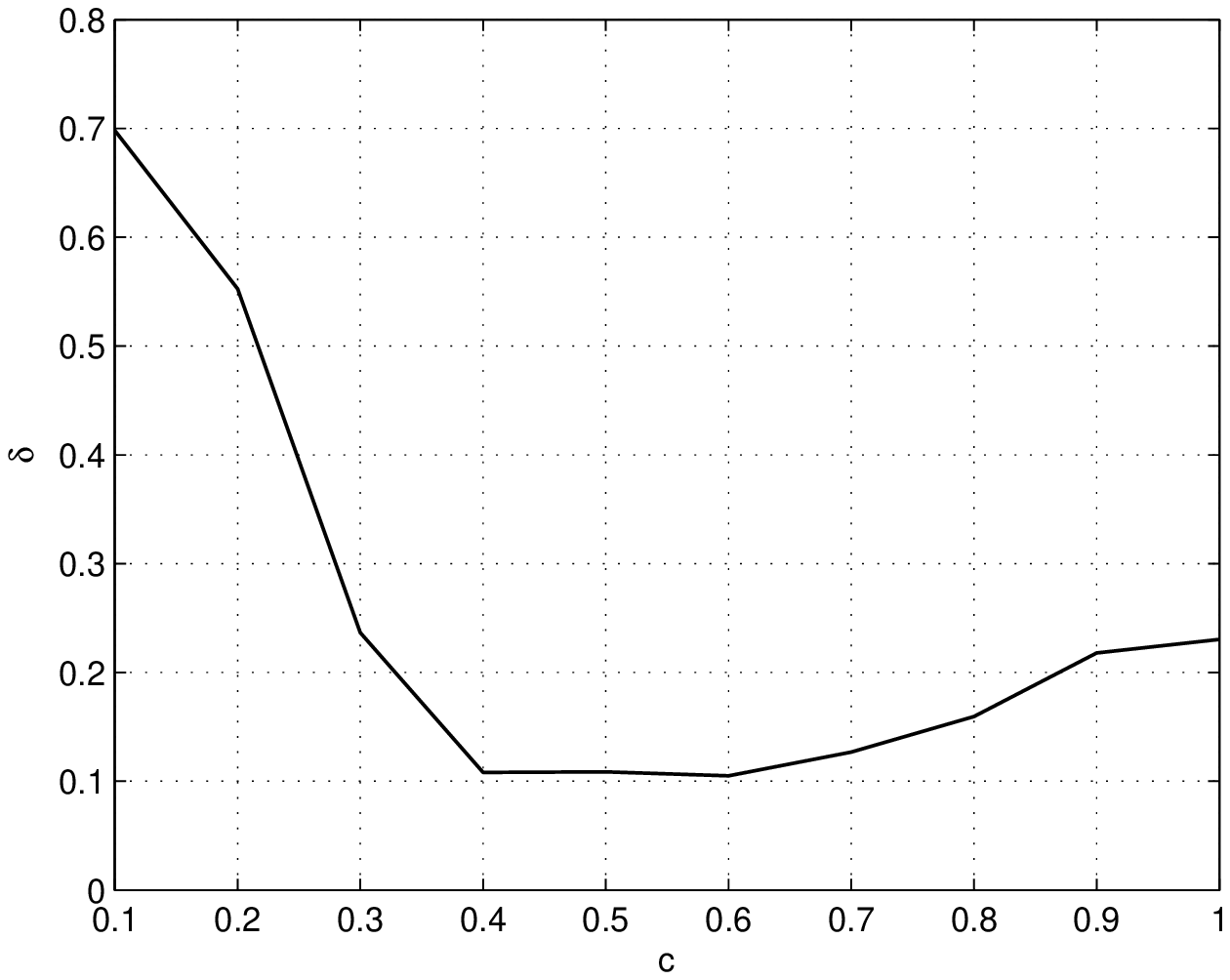}
\includegraphics[angle=0,scale=.7]{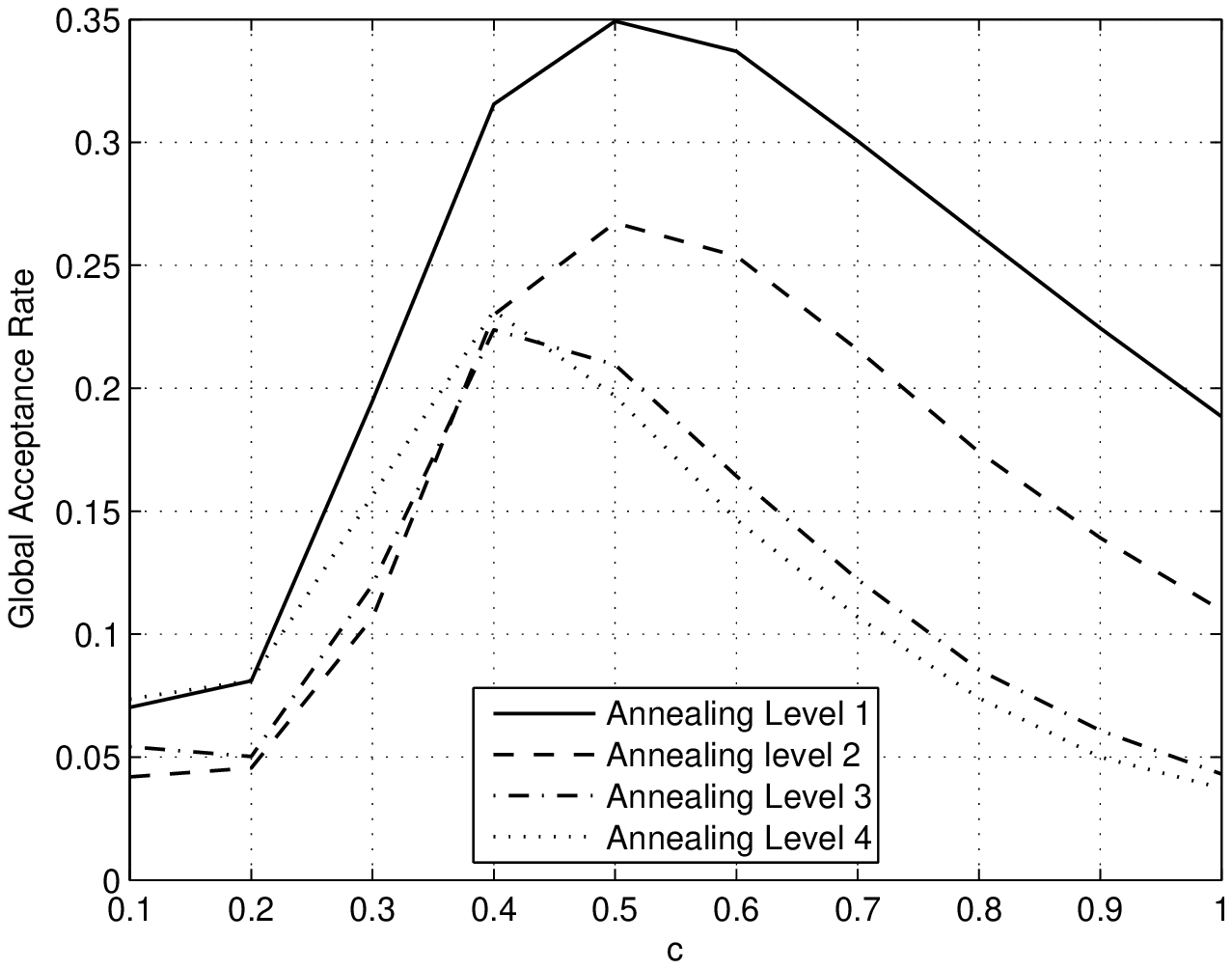}
\includegraphics[angle=0,scale=.7]{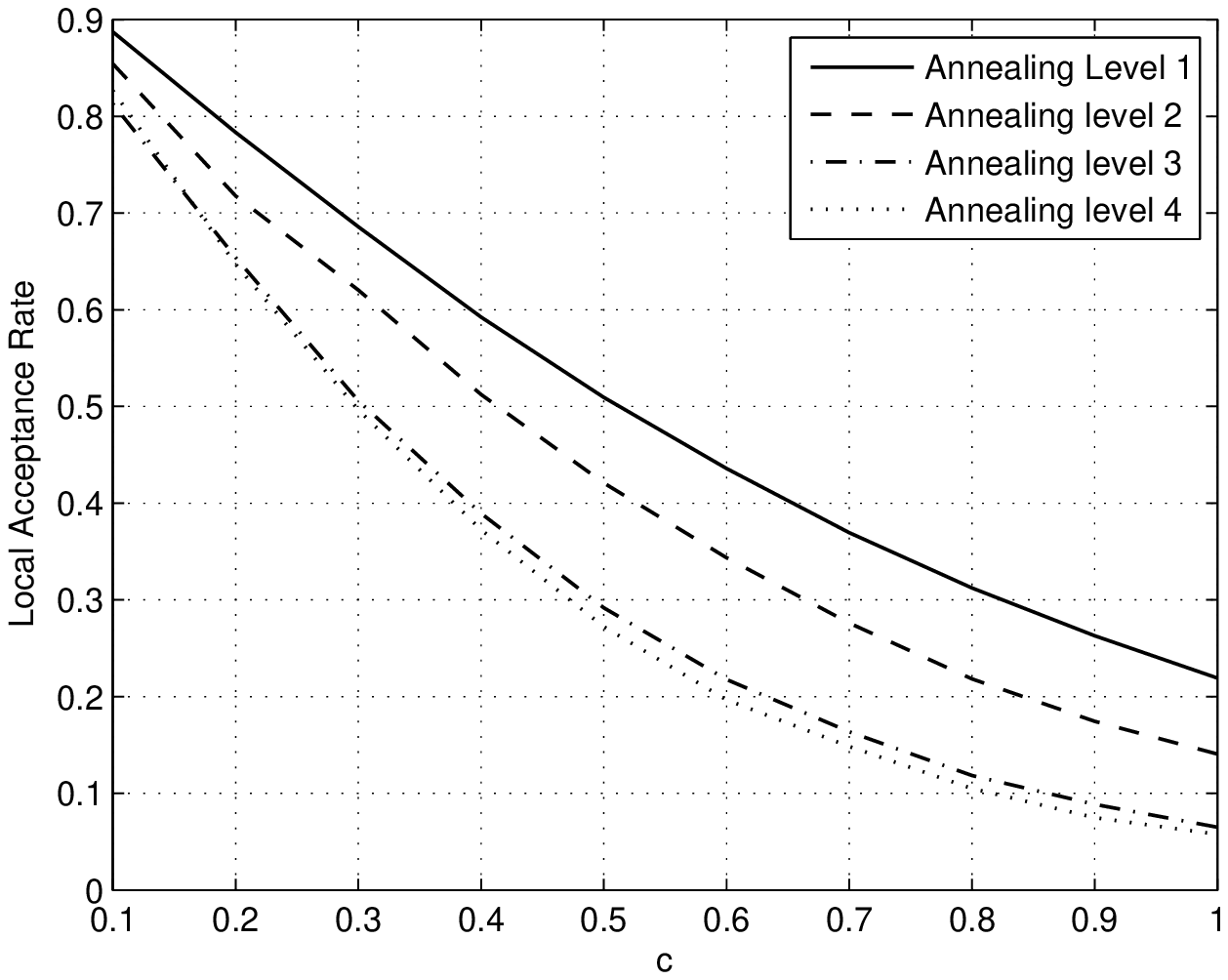}
\caption{\footnotesize Coefficient of variation $\delta$ of the AIMS estimate (top panel), global acceptance rate (middle panel), and
local acceptance rate (bottom panel) as  functions of $c$ for Case 3 ($d=6$) [Example \ref{multGauss}].}\label{fig3}
\end{figure}

\begin{figure}[h]\centerline{
\includegraphics[angle=0,scale=0.8]{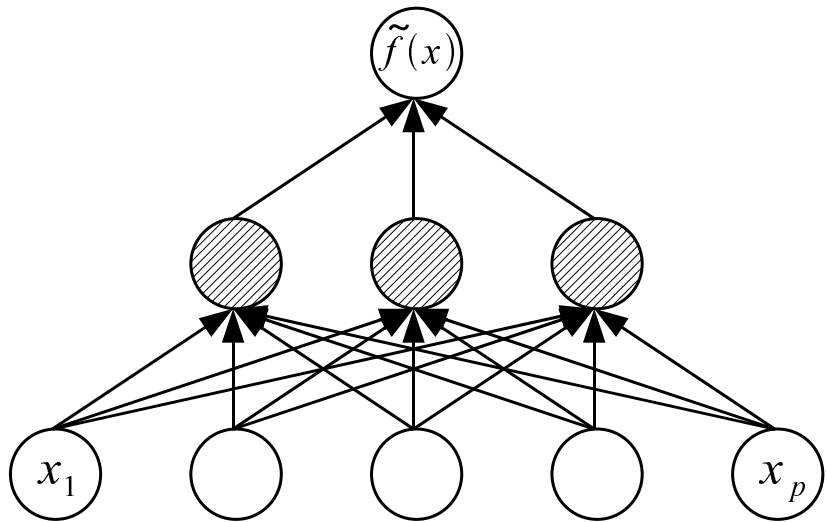}}
\caption{\footnotesize The feed-forward neural network model with one hidden layer (shown by hatching) [Example~4.3].}
\label{FFNNmodel}
\end{figure}

\begin{figure}[h]\centerline{
\includegraphics[angle=0,scale=1]{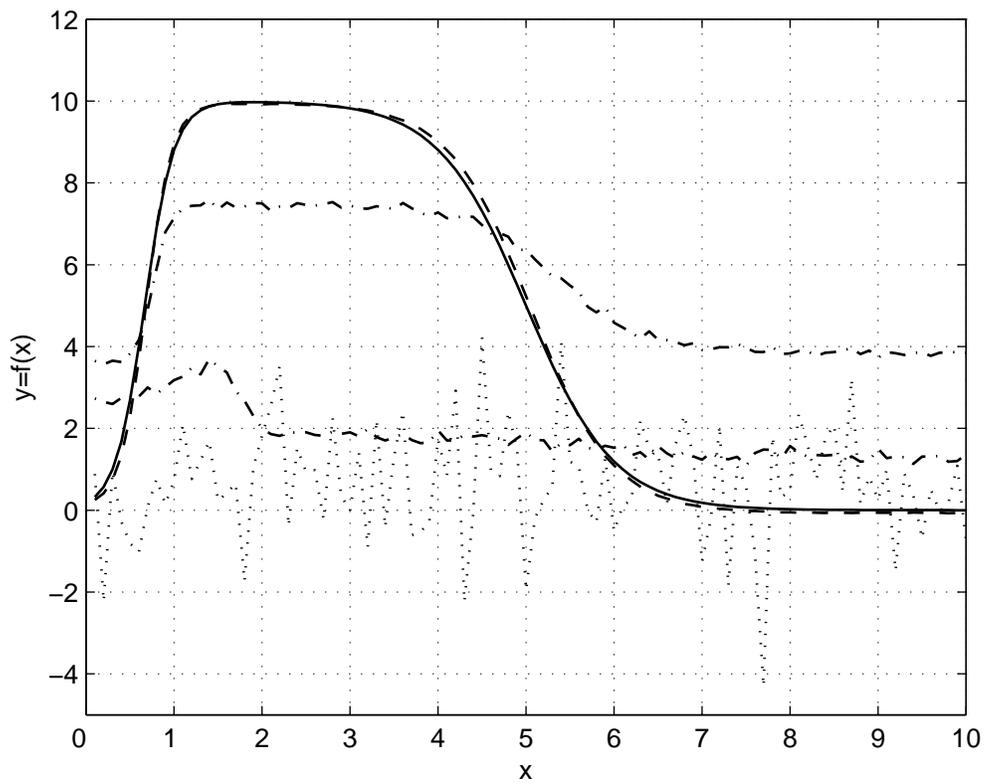}}
\caption{\footnotesize The true function $f(x)$ (solid curve), its posterior approximation $\bar{\hat{f}}_{10}(x)$ (dashed curve) which is
constructed using AIMS,
and ``intermediate annealing approximations'': $\bar{\hat{f}}_0(x)$ (dotted curve) which is based on prior samples, $\bar{\hat{f}}_2(x)$ and
$\bar{\hat{f}}_3(x)$ (dashed-dotted curves) [Example 4.3].}
\label{FFNNexample}
\end{figure}

\begin{figure}[h]\centerline{
\includegraphics[angle=0,scale=1]{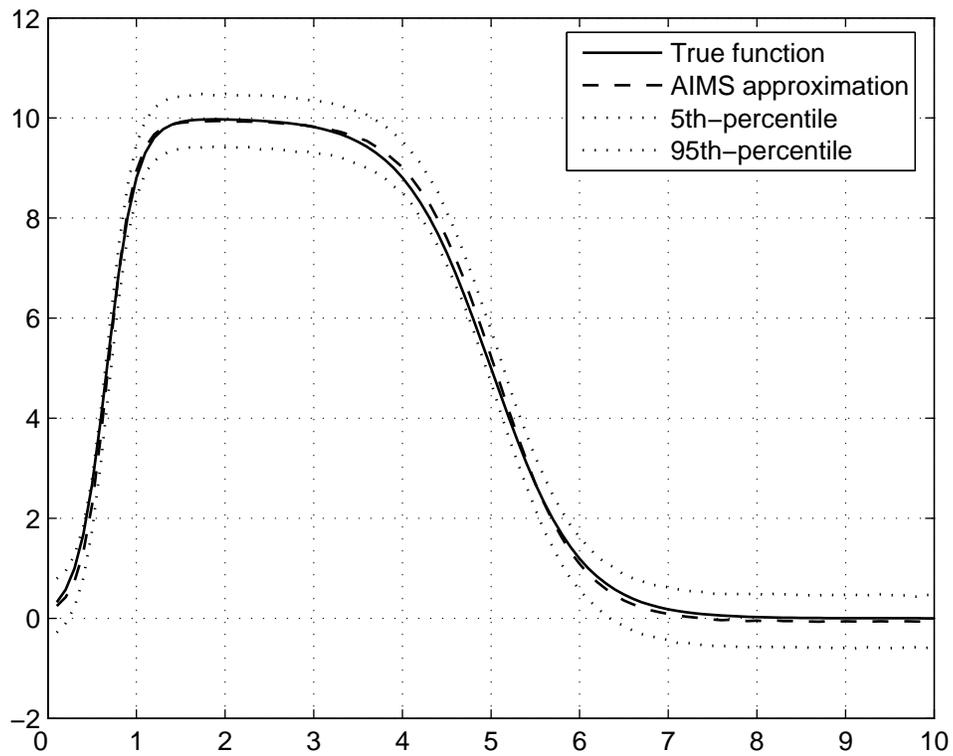}}
\caption{\footnotesize The true function $f(x)$ (solid curve), its AIMS approximation $\bar{\hat{f}}_{10}(x)$ (dashed curve),
and 5th and 95th percentiles
of $\bar{\hat{f}}_{10}(x)$ (dotted curves) [Example 4.3].}
\label{FFNNexample2}
\end{figure}

\end{document}